\documentclass{elsarticle}
\makeatletter
\def\ps@pprintTitle{%
	\let\@oddhead\@empty
	\let\@evenhead\@empty
	\def\@oddfoot{}%
	\let\@evenfoot\@oddfoot}
\makeatother

\usepackage[utf8]{inputenc}
\usepackage[T1]{fontenc}
\usepackage{rsfso}
\usepackage{amsthm}
\usepackage{newtxtext,newtxmath} 
\usepackage{setspace}
\usepackage{graphicx}
\usepackage{subfig}
\usepackage{caption}
\usepackage{amsmath}
\theoremstyle{plain}
\usepackage{algorithm}
\usepackage{algpseudocode}

\usepackage{tikz}
\usetikzlibrary{arrows.meta, positioning, calc}
\usepackage{booktabs,tabularx,array}
\usepackage[hidelinks]{hyperref}
\newcolumntype{Y}{>{\raggedright\arraybackslash}X}
\newtheorem{theorem}{Theorem}[section]

\newtheorem{remark}{Remark}[section]

\usepackage{color}
\usepackage{colordvi}

\newtheorem{prop}{Proposition}[section]

\newtheorem{cor}[prop]{Corollary}

\newcommand{\komment}[1]{}

\usepackage[right]{lineno}

\begin{document}

\begin{frontmatter}
	\title{Multi-type branching inference on contact trees with application to COVID-19}

	\author[TUMGarching]{Augustine Okolie\corref{mycorrespondingauthor}}
	\author[TUMGarching,ICB]{Johannes M\"uller}
        \author[UNILAG]{Eno Akarawak}
         \author[UNILAG]{Isaac Ajiboye}
	\address[TUMGarching]{Center for Mathematical Sciences, Technische Universit\"at M\"unchen, 85748 Garching, Germany}
	\address[ICB]{Institute for Computational Biology, Helmholtz Centre Munich, 85764 Neuherberg, Germany}
	\address[UNILAG]{University of Lagos, University Road, Lagos Mainland, Lagos, Nigeria}

\cortext[mycorrespondingauthor]{Corresponding author.\\
 Email address: \url{augustine.okolie@tum.de} (A.Okolie).}

\begin{keyword}
Stochastic SIR model on a contact tree \sep COVID-19 \sep multi-type branching process \sep contact tracing \sep tree likelihood \sep maximum likelihood estimation \sep Bayesian inference
\MSC[2010] 92D30, 60J85, 62P10
\end{keyword}

\begin{abstract}
Inferring epidemiological parameters from transmission trees is essential for understanding infectious disease dynamics. Existing tree-based likelihood
methods, including the multi-type birth-death models originally applied in
phylodynamic settings, provide powerful tools, but most assume homogeneous mixing and rarely capture how transmission potential changes as an individual
infects more of their contacts. In this work, we develop a likelihood framework
that operates directly on transmission trees, in which nodes are individuals and edges are reported transmission events, with no sequence data
involved. We derive a likelihood for a stochastic SIR
process on a rooted contact tree in which each infected individual is characterised by the total number of effective contacts, and the number of already infected downstream contacts. We obtain closed-form ordinary differential equations for the probability that a clade goes entirely unobserved and for the probability density that it produces an observed (sampled) tip in a given state. The resulting likelihood can be evaluated for a
rooted contact tree with known tip states, and we extend it to partially resolved trees by treating internal branching times as latent variables.
Validation on simulated outbreaks confirms accurate parameter recovery and well
calibrated uncertainty. Application to empirical COVID-19 contact-tracing data
from Karnataka, India, demonstrates the framework's utility for real
epidemiological settings. By incorporating contact-degree heterogeneity in a multi-type branching likelihood, our work provides a principled baseline for
inferring both transmission dynamics and contact structure from fully or
partially resolved transmission trees, complementing rather than relying on
sequence-based phylodynamic inference.
\end{abstract}

\end{frontmatter}

\section{Introduction}
Infectious disease dynamics represent a critical area of study in epidemiology, requiring robust methodologies to understand how diseases spread within populations and how their transmission parameters can be inferred. Several major methods in this domain rely heavily on multi-type branching processes, which provide a natural framework for modelling transmission in heterogeneous populations where individuals may differ in their disease state, contact structure or other relevant characteristics \citep{rasmussen2017phylodynamics, laha2023multi, zhao2024impacts, ball2017heterogeneous, ball2026multitype, singh2014using}. The application of these models to epidemiology was substantially advanced by \citet{stadler2013uncovering}, who developed a maximum likelihood framework based on a multitype birth-death branching (MTBD) process to infer parameters in structured populations. In their framework, each individual belongs to a distinct transmission group, allowing estimation of epidemiological parameters based on phylogenetic trees that depend on type, such as transmission and recovery rates. For these models, the likelihood is computed on a rooted tree by propagating, along the branches of the tree, a pair of backward differential equations. The likelihood central to these tree-based methods originates in the state-dependent birth-death model of \citet{maddison2007estimating}, who evaluated the effects of character types on diversification rates. The approach has been applied to viral phylogenies to uncover the role of structured populations in shaping epidemic trajectories, including studies of HIV and influenza \citep{grenfell2004unifying, kuhnert2014simultaneous, volz2009phylodynamics, volz2013viral, kuhnert2014simultaneous, stadler2013uncovering,stadler2013skyline}.

\medskip
The underlying likelihood framework we use is close to these phylogenetic methods, but with no sequence data involved. Specifically, multi-type branching processes evaluated on a rooted binary tree were originally devised for phylogenies inferred from pathogen genome sequences; however, the same framework extends naturally to contact tracing data, where the rooted binary tree is built directly from reported transmission events. In that setting, nodes correspond to individuals, edges to reported transmission events, and tips to observed cases, and no molecular sequence information enters the analysis. Our contribution is therefore to develop a tree likelihood specifically for contact-traced transmission trees, building on the multi-type branching process tradition used in phylodynamics but applied directly to contact tracing data without sequence input.

\medskip
Recent work has expanded the scope of multi-type branching process models to address key practical challenges in epidemic modeling. \citet{zhukova2025accounting} extended MTBD models to account for contact tracing, where detected individuals notify their contacts who are then rapidly tested and isolated. They developed a simulator and a nonparametric test for detecting contact tracing signals in transmission trees, and showed that ignoring this sampling mechanism leads to severe bias in parameter estimates, particularly overestimation of the removal rate. Separately, \citet{laha2023multi} proposed a time varying multi-type branching process model that captures the effects of government interventions including lockdown, testing, and partial contact tracing. Their model distinguishes between detected and undetected cases, tracks quarantined individuals identified through contact tracing, and estimates the number of unreported spreaders and nonspreaders in the population.

\medskip
Despite these advances, existing multi-type branching process models typically assume homogeneous mixing and that transmission rates depend only on the type or state of the individual, not on the history of transmissions that individual has already made. In reality, an individual's remaining transmission potential decreases as they infect more of their contacts, since each infection event depletes the pool of susceptible contacts. This saturation effect is particularly relevant for diseases where contact structure matters, such as sexually transmitted infections or household-based transmission, and is well documented in network epidemiology \citep{miller2012edge,kenah2007network,istvan2017mathematics}. \citet{metzig2019phylogenies} examined the role of contact structure using phylogenies generated from dynamic networks, although their study relied entirely on simulation-based analyses. A notable step
towards incorporating contact structure is the work of \citet{rasmussen2017phylodynamics}, who introduced a pairwise coalescent model
that incorporates host degree into a likelihood framework. Their approach uses a random graph model and stratifies
the population by the number of contacts each individual has, tracking connections between different contact classes. The model still
assumes, however, that an infected individual transmits at a rate proportional to the total number of contacts, regardless of how many of those contacts have already been infected.

\medskip
Our approach builds upon and consolidates this existing knowledge by developing a multi-type branching process model defined directly on a rooted contact tree \citep{okolie2020exact, okolie2023parameter, okebunor2022contact}, where each individual has a degree drawn from a degree distribution representing their number of contacts. We stratify the population by both contact number and infection stage, but crucially extend the type space to include a dynamic state that tracks how many of an individual's downstream contacts have already been infected. This allows the transmission rate to decrease as the infection progresses and susceptible contacts are depleted, capturing the reduction in transmission potential that arises from contact depletion. We derive the likelihood for this model and apply it in a Bayesian inference framework to contact tracing data from the first wave of COVID-19 in Karnataka, India. Our analysis recovers reproduction numbers consistent with independent contact tracing studies while providing new insight into how contact network shape the structure of the underlying transmission tree.

\section{Model Setup}\label{sec:model}
We consider a stochastic susceptible–infected–recovered (SIR) model on a rooted contact tree in which the root node is the primary infected individual, and every node represents an individual in the population \cite{okolie2020exact, okolie2023parameter}. Each node has a downstream degree $k$, drawn independently and identically as a realisation of a random variable $K$ with probability mass function $w_k:=\mathbb{P}(K=k)$, $k\geq 0$, for which we write $K\sim w_k$, where the downstream degree records the total number of contacts available for onward transmission. Downstream contacts are the edges from an infector to its potential infectees (i.e., the direction of epidemic spread away from the root); equivalently, a node is downstream of another if the path from the root passes through the latter \cite{okolie2020exact}. At each time point every node carries an SIR state in $\{S,I,R\}$, with all nodes initially in state $S$ except the root, which is in state $I$. A node in state $I$ infects each of its still susceptible downstream neighbours independently at rate $\beta$, and leaves the infectious class either observed ($\sigma$) or unobserved ($\mu$), at a total rate $\gamma=\mu + \sigma $. With probability $p_{\mathrm{obs}}=\sigma/\gamma$, this removal is observed and generates a sampled tip carrying the individual's current state, while with probability $1-p_{\mathrm{obs}}=\mu/\gamma$ it is unobserved and the individual leaves no record.

\medskip
We augment the disease state of a node by its character type $(i,k)\in\{0,\dots,k\}\times\{0,1,\dots\}$ denoting the number of
already infected downstream contacts $i$ out of it total downstream contacts $k$. Here $i$ records how many of the $k$ downstream edges have already been used to transmit, so
$k-i$ is the number of downstream edges still leading to a susceptible
contact. At the moment of infection, a node enters type $(0,k)$, and each subsequent transmission along one of its $k-i$ susceptible edges
advances its type from $(i,k)$ to $(i+1,k)$. A newly infected
always has type $(0,K')$ with $K'$ drawn independently from
$w_k$, and upon recovery (observed or unobserved), the node exits
the active class into the recovered state $R$ with final type $(i^{\ast},k)$, where $0\le i^{\ast}\le k$.
Under this augmentation, the effective infection rate of a node with type $(i,k)$ is $(k-i)\beta$. The data analysed by our likelihood is a sampled subtree of the
underlying SIR process, in its most informative form a rooted binary
topology with the calendar time of every internal transmission event and observed tip together with the observed type at each tip. We treat
the degree $k$ as latent throughout, with distribution $w_k$, so the framework applies whether or not tip degrees are recorded; when they
are recorded, they supply additional information for inference.

\medskip
For ease of exposition, we first develop the theory under a degenerate
degree distribution $w_k=\mathbf{1}[k=k_0]$, that is, a fixed
deterministic degree $k$, and only later generalise to a non-trivial
degree distribution. To simplify the notation, we suppress the
$k$ in $(i,k)$ and write $i$ for an individual's state, with the
implicit understanding that $i$ is coupled with $k$.

\subsection{Formulating Probabilities Along a Branch}

We derive the probability density of a sampled tree from two fundamental quantities, following the backward differential-equation pruning scheme of \citet{maddison2007estimating}. Let $E_i(t)$ denote the probability that a
type-$i$ individual and all its descendants remain unobserved over an
elapsed interval of length $t$. Let $D^i_j(t)$ denote the
probability density that a clade rooted by a type-$i$ individual at
time $0$ produces exactly one sampled descendant of type $j$ at time
$t$, with no other observations in $(0,t]$. Throughout, we set the
present time to $t=0$ and let $t \ge 0$ denote the elapsed time
measured into the past, so that $(0,t]$ refers to relative time
rather than chronological or calendar time. We derive $E_i(t)$
first, and then use it in the derivation of $D^i_j(t)$.

\begin{figure}
\centering
\begin{tikzpicture}[> =stealth, node distance=1.5cm, every node/.style={scale=1}]

  \node[] (n0) {$i$};
  \node[below=3.0cm of n0] (j0) {};

  \draw[-,thick,black,dashed] (n0) -- (j0);

  \begin{scope}[xshift=3cm]
  \node[] (n1) {$i$};
  \node[below=3.0cm of n1] (j1) {};

  \draw[thick, dotted, -] (n1) -- (j1);
  \end{scope}

  \begin{scope}[xshift=7cm]
    \node (i3) {$i$};
    \coordinate (b3) at ([yshift=-1.2cm]i3.south);
    \draw[thick,-]   (i3.south) -- (b3);
    \node[left=of b3]  (ip1_3) {$i+1$};
    \node[right=of b3] (z3)    {$0$};
    \node[below=of ip1_3] (e3) {};
    \node[below=of z3]    (j3) {};

    \draw[dashed, thick, ->] (b3)   -- (ip1_3);
    \draw[dashed, thick, ->]  (b3)   -- (z3);
    \draw[dashed, thick]    (ip1_3) -- (e3);
    \draw[dashed, thick]  (z3)   -- (j3);
  \end{scope}

\end{tikzpicture}
\caption{Possible events for $E_{i}(t)$. Left: No infection, lineage recovers unobserved. Middle: No infection, no recovery, lineage remains unobserved. Right: Infection of a new lineage $0$, parent lineage $i+1$ and newborn lineage go unobserved.}
\label{fig:E_i}
\end{figure}

\begin{figure}
\centering
\begin{tikzpicture}[
    node distance = 1.2 cm and 1.2cm,
    every node/.style={font=\small},
    >=stealth
]
  \begin{scope}
    \node (i1) {$i$};
    \node[below=2.6cm of i1] (j_g1) {$j$};
    \draw[thick,->] (i1) -- (j_g1);
  \end{scope}

  \begin{scope}[xshift=5cm]
    \node (i2) {$i$};
    \coordinate (b2) at ([yshift=-1.2cm]i2.south);
    \draw[thick,-]   (i2.south) -- (b2);
    \node[left=of b2]  (ip1_2) {$i+1$};
    \node[right=of b2] (z2)    {$0$};
    \node[below=of ip1_2] (j2) {$j$};
    \node[below=of z2]    (e2) {};

    \draw[thick,->]  (b2)   -- (ip1_2);
    \draw[dashed, thick, ->] (b2)   -- (z2);
    \draw[thick,->]  (ip1_2) -- (j2);
    \draw[dashed, thick]    (z2)   -- (e2);
  \end{scope}

  \begin{scope}[xshift=10cm]
    \node (i3) {$i$};
    \coordinate (b3) at ([yshift=-1.2cm]i3.south);
    \draw[thick,-]   (i3.south) -- (b3);
    \node[left=of b3]  (ip1_3) {$i+1$};
    \node[right=of b3] (z3)    {$0$};
    \node[below=of ip1_3] (e3) {};
    \node[below=of z3]    (j3) {$j$};

    \draw[dashed, thick, ->] (b3)   -- (ip1_3);
    \draw[thick,->]  (b3)   -- (z3);
    \draw[dashed, thick]    (ip1_3) -- (e3);
    \draw[thick,->]  (z3)   -- (j3);
  \end{scope}

\end{tikzpicture}
\caption{Possible events for $D_{j}^{i}(t)$. Left: No infection, lineage recovers observed via survival. Middle: Infection of a new lineage $0$, the parent lineage (now in $i+1$) produced the sampled node at present, while the newborn lineage must go unobserved. Right: Infection of a new lineage $0$, the newborn lineage produced the sampled node at present, while the parent lineage (now in $+1$) must go unobserved.}
\label{fig:D_i_j}
\end{figure}

\subsubsection{Probability to Stay unobserved, \texorpdfstring{$E_{i}(t)$}{Ei(t)}}

To solve for $D_{j}^{i}(t)$, we first calculate the probability $E_{i}(t)$ that a type $i$ individual and all its descendants remain unobserved after a time interval $t$.

\begin{prop}
Let $i=0, \ldots, k$. The probability $E_{i}(t)$ that individual $i$ leaves no sampled descendants after time $t$ reads
\begin{eqnarray}
\frac{d}{d t} E_{i}(t)=\mu-\big(\gamma+ (k-i)\,\beta \big) E_{i}(t)+ (k-i)\,\beta E_{0}(t) E_{i+1}(t), \qquad E_i(0) =1.
\label{eq:E_fixed}
\end{eqnarray}
\end{prop}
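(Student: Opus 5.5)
We derive the ODE by a first-step (backward Kolmogorov) analysis over a short interval $[0,\Delta t]$ at the start of the lineage's history, read in the backward time convention of the paper. The key structural fact is the branching property. Given the model in Section~\ref{sec:model}, once a type-$i$ individual infects a new contact, three things hold:
\begin{itemize}
\item the parent continues as an independent type-$(i+1)$ lineage;
\item the infectee starts a fresh type-$0$ lineage;
\item their subsequent subtrees evolve independently, because the downstream edges of distinct nodes are disjoint in a tree and degrees are drawn i.i.d.
\end{itemize}
Hence the probability that both subclades stay unobserved over the remaining time factorises as $E_{0}\,E_{i+1}$.

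Condition on what happens to the focal type-$i$ individual during $[0,\Delta t]$. It is in state $I$ with $k-i$ susceptible downstream edges, and the competing exponential clocks give the following possibilities up to $o(\Delta t)$:
\begin{itemize}
\item With probability $\mu\Delta t$, it is removed unobserved. It leaves no record and no descendants, so it contributes $1$.
\item With probability $\sigma\Delta t$, it is removed observed. It contributes $0$, because a sampled tip appears.
\item With probability $(k-i)\beta\Delta t$, a transmission occurs. It contributes $E_0(t)E_{i+1}(t)$ by the branching property.
\item With probability $1-(\gamma+(k-i)\beta)\Delta t$, nothing happens. It contributes $E_i(t)$.
\end{itemize}
Summing these gives
\[
E_i(t+\Delta t)=\mu\Delta t+(k-i)\beta\Delta t\,E_0(t)E_{i+1}(t)+\bigl(1-(\gamma+(k-i)\beta)\Delta t\bigr)E_i(t)+o(\Delta t).
\]
Subtracting $E_i(t)$, dividing by $\Delta t$ and letting $\Delta t\to0$ yields \eqref{eq:E_fixed}. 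The initial condition $E_i(0)=1$ holds because over an interval of length zero nothing can be observed.

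At the boundary $i=k$, the transmission term vanishes since $k-i=0$, so $E_{k+1}$ never enters and the system is closed and triangular in the coupling to $E_0$. The right-hand side is polynomial, so existence and uniqueness on $[0,\infty)$ follow from local Lipschitz continuity together with the a priori bound $0\le E_i\le1$.

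The main obstacle is justifying the factorisation and the $o(\Delta t)$ remainder rigorously. The remainder requires that two or more events in $[0,\Delta t]$ have probability $O(\Delta t^2)$, uniformly in $t$. This holds because all rates are bounded: the focal lineage has total rate at most $\gamma+k\beta$. The factorisation requires the strong Markov and branching property at the infection epoch. I would state this explicitly by noting that the type $(i+1)$ of the parent and the type $0$ of the child carry all relevant information, since they encode the remaining susceptible downstream edges.

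Alternatively, to avoid the derivative limit, one can condition on the first event time $\tau\sim\mathrm{Exp}(\gamma+(k-i)\beta)$. This gives the renewal integral equation
\[
E_i(t)=e^{-(\gamma+(k-i)\beta)t}+\int_0^t e^{-(\gamma+(k-i)\beta)s}\bigl(\mu+(k-i)\beta E_0(t-s)E_{i+1}(t-s)\bigr)\,ds.
\]
Differentiating this equation in $t$ gives \eqref{eq:E_fixed} directly.
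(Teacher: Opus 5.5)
Your proof is correct and takes essentially the paper's approach. The paper conditions on the first event of the focal lineage, computing the no-event and unobserved-removal probabilities explicitly and treating $i=k$ separately. This gives exactly your renewal integral equation, after the substitution $a=t-s$, which it then differentiates using the same factorisation $E_0E_{i+1}$ at the infection epoch that drives your infinitesimal first-step version; your remarks on the $o(\Delta t)$ remainder, the branching property and well-posedness add rigour the paper does not supply.
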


\begin{proof}
With the present at $t=0$ and time measured backwards considering all possible events (see figure~\ref{fig:E_i}), we proceed recursively in the number of remaining susceptible contacts.

\paragraph{Base case ($i=k$)} When the focal individual has no downstream susceptible contacts left ($i=k$), the only events on $[0,t]$ are observed recovery (which would produce a sampled tip, and is therefore forbidden in the event of no sampled descendants) or unobserved recovery. Let $q(t)$ denote the probability of remaining infectious at time $t$, and $r(t)$ the probability of recovering unobserved by time $t$. These evolve as
\begin{eqnarray*}
\frac{d}{dt}q(t) = -\gamma q(t), \qquad \frac{d}{dt}r(t) = \mu q(t),
\end{eqnarray*}
with $q(0)=1$ and $r(0)=0$. Solving gives $q(t)=e^{-\gamma t}$ and $r(t)=\frac{\mu}{\gamma}(1-e^{-\gamma t})$. The total probability of remaining unsampled is thus
\begin{eqnarray*}
E_k(t) = q(t)+r(t) = \frac{\mu}{\gamma} + \frac{\sigma}{\gamma}e^{-\gamma t},
\end{eqnarray*}
which satisfies $\frac{d}{dt}E_k(t) = \mu - \gamma E_k(t)$ with $E_k(0)=1$.

\paragraph{Recursive step ($i<k$)} For an individual with $k-i$ remaining susceptible contacts, three events can occur in a small interval: no event, unobserved recovery at rate $\mu$, or infection at rate $(k-i)\beta$. The probability of remaining infectious, $q(t)$, now decays at the total rate $\gamma+(k-i)\beta$ of all events, while unobserved recovery accumulates at rate $\mu$. This yields
\begin{eqnarray*}
q(t) = e^{-(\gamma+(k-i)\beta)t}, \qquad r(t) = \frac{\mu}{\gamma+(k-i)\beta}\bigl(1-e^{-(\gamma+(k-i)\beta)t}\bigr).
\end{eqnarray*}
If an infection occurs at time $a\in(0,t]$, which happens with probability density $(k-i)\beta\, q(a)$, the parent lineage advances to state $i+1$ and the newborn enters state $0$. For the entire clade to remain unobserved, both must leave no sampled descendants over the remaining interval $t-a$, giving the product $E_0(t-a)E_{i+1}(t-a)$. Integrating over all possible infection times and adding the no-infection contribution,
\begin{eqnarray*}
E_i(t) = \frac{\mu}{\gamma+(k-i)\beta} + \frac{\sigma+(k-i)\beta}{\gamma+(k-i)\beta}e^{-(\gamma+(k-i)\beta)t} + \int_0^t (k-i)\beta\,e^{-(\gamma+(k-i)\beta)(t-a)}E_0(a)E_{i+1}(a)\,da.
\end{eqnarray*}
Differentiating in $t$ eliminates the integral and yields \eqref{eq:E_fixed}, with the convention $E_{k+1}\equiv 0$.
\end{proof}

\subsubsection{Probability density to produce a single sampled tip,
\texorpdfstring{$D^{i}_j(t)$}{Dij(t)}}

Let $0<\tau\le t$ be fixed, we now derive $D^{i}_j(t)$, the probability density that a clade rooted at time $0$ with type $i$ produces exactly one observation, of type $j$ at time $\tau$, and no other observation in $(0,t]$. By construction $D^{i}_j(t)$ depends on $t$ as well as on the sampling time $\tau$, which we suppress in the notation; as a density it is normalised with respect to
$\tau$, and $\tau$ enters only through the initial condition
$D^{i}_j(\tau)=\sigma\,\chi_i(j)$, which records that at $\tau$ the lineage
is sampled (rate $\sigma$) and is observed as type $i=j$. The forward ODE
in $t$ then propagates this initial density to any later time.

\begin{prop}
Let $i=0, \ldots, k$, the probability $D_{j}^{i}(t)$ that an individual $i$ evolves to a type \( j \) individual by the present time $t$ reads

\begin{equation}
\frac{d}{dt}D_{j}^{i}(t) = -\big(\gamma  + (k-i)\,\beta \big)\,D_{j}^{i}(t)+ (k-i)\,\beta\,E_{0}(t)\,D_{j}^{i+1}(t) + (k-i)\,\beta\,E_{i+1}(t)\,D_{j}^{0}(t),
\label{eq:D_fixed}
\end{equation}

where $E_{k+1}(t)=0$ and $D_{j}^{k+1}(t)=0$, with initial condition

\[
    D_{j}^{i}(\tau)=
\begin{cases}
    \sigma,& \text{if } j = i\\
    0,& \text{if } j \neq i
\end{cases}
\]

\end{prop}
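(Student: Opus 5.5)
I will mirror the first-step (integral-equation) argument used for $E_i(t)$ in the preceding proposition, now conditioning on the first event on the lineage after the clade root and tracking where the single sampled tip comes from. As there, I proceed by backward induction on $i$, starting from $i=k$, with the conventions $E_{k+1}\equiv 0$ and $D^{k+1}_j\equiv 0$. Since $\tau$ only fixes the sampling time, I measure the elapsed interval from the observation: on an interval of length $s=t-\tau$ before sampling, the waiting time to the first event of a type-$i$ lineage is exponential with total rate $\lambda_i:=\gamma+(k-i)\beta$. The only way the process can end in exactly one observation of type $j$ at $\tau$ is one of the three scenarios of figure~\ref{fig:D_i_j}.

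\textbf{Base case ($i=k$).} With no susceptible contacts left, no infection can occur. A single observation of type $j$ requires that the lineage stays infectious and unchanged for the whole interval, with probability $e^{-\gamma(t-\tau)}$, and is then sampled at $\tau$ with density $\sigma$. Its type is then $k$, so
\[
D^k_j(t)=\sigma\chi_k(j)\,e^{-\gamma(t-\tau)}.
\]
This satisfies $\frac{d}{dt}D^k_j=-\gamma D^k_j$ with the stated initial condition, which is \eqref{eq:D_fixed} for $i=k$.

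\textbf{Recursive step ($i<k$).} I decompose on the first infection event at elapsed time $a$, which has density $(k-i)\beta e^{-\lambda_i a}$. The no-event term is $\sigma\chi_i(j)e^{-\lambda_i(t-\tau)}$: the lineage survives without any event and is sampled as type $i$. After an infection, the parent has type $i+1$ and the newborn has type $0$. Both descend independently over the remaining interval, and exactly one observation must arise in total. Either the parent subclade yields it while the newborn subclade yields none, contributing $D^{i+1}_j E_0$, or the roles are swapped, contributing $D^0_j E_{i+1}$. These two events are disjoint, and independence follows from the branching property: the newborn's degree is drawn afresh and its future depends only on its own edges. Unobserved recovery contributes nothing, because then no sample is produced. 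This gives the renewal equation
\[
D^i_j(t)=\sigma\chi_i(j)e^{-\lambda_i(t-\tau)}+\int_\tau^t (k-i)\beta\, e^{-\lambda_i(t-a)}\bigl(E_0(a)D^{i+1}_j(a)+E_{i+1}(a)D^0_j(a)\bigr)\,da,
\]
written with the integration variable shifted exactly as in the $E_i$ proof. Differentiating in $t$ gives \eqref{eq:D_fixed}, and setting $t=\tau$ gives the initial condition.

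\textbf{Main obstacle.} The difficulty is that the recursion is not triangular: $D^i_j$ depends on $D^0_j$, which itself depends on all the $D^{i'}_j$. The induction on $i$ therefore does not close directly, unlike for $E_i$. I will handle this by treating the integral equations for $(D^0_j,\dots,D^k_j)$ jointly as a linear Volterra system with continuous coefficients $E_0$ and $E_{i+1}$, which are already known from the preceding proposition. Such a system has a unique continuous solution, obtained by Picard iteration, or by expanding in the number of infection events, since each term in that expansion is a finite composition. Uniqueness is then transferred to the linear ODE system \eqref{eq:D_fixed}.

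A second point of care is the time convention. The derivative must be taken with respect to the elapsed time measured from the sampling time $\tau$ back toward the clade root. With that convention, the decay term $-\lambda_i D^i_j$ and the coupling terms carry the signs shown in \eqref{eq:D_fixed}, consistent with \eqref{eq:E_fixed}.
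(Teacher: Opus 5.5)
Your proposal is correct and follows the paper's proof: the base case $i=k$, then a first-event decomposition into no infection, parent-yields-the-tip ($E_0D^{i+1}_j$) and newborn-yields-the-tip ($E_{i+1}D^0_j$), giving an integral equation that is differentiated in $t$; your integral over $a\in(\tau,t)$, with every factor evaluated at the same backward time $a$, is exactly the form whose $t$-derivative is the stated ODE. The non-triangularity you flag is not an obstacle to the derivation, since the first-step identity holds for each $i$ directly from the definitions (and the $E_i$ system is equally coupled through $E_0$, so the contrast with $E_i$ does not hold); it bears only on uniqueness, which your linear Volterra remark settles and the paper leaves implicit.
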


\begin{proof}
At the sampling time $\tau$ the lineage is in state $i=j$ and is removed observed, giving the initial condition $D^i_j(\tau)=\sigma\,\chi_i(j)$ where $\chi_i(j)$ is the indicator function. For $t>\tau$ we proceed recursively in the number of remaining susceptible contacts $k-i$, tracing the clade backwards by considering all possible events (see figure~\ref{fig:D_i_j}).

\paragraph{Base case ($i=k$)} When the focal individual has no downstream susceptible contacts left ($i=k$), no infection events can occur. The only possibility is that the lineage survives from $\tau$ to $t$ without being observed, decaying at rate $\gamma$. Let $\phi(t)$ denote the probability density of this survival. Then
\begin{eqnarray*}
\frac{d}{dt}\phi(t) = -\gamma\,\phi(t), \qquad \phi(\tau)=\sigma,
\end{eqnarray*}
which solves to $\phi(t)=\sigma\,e^{-\gamma(t-\tau)}$. The probability density that the clade produces exactly one sampled tip of type $j$ is therefore
\begin{eqnarray*}
D^k_j(t) = \sigma\,e^{-\gamma(t-\tau)}\chi_k(j),
\end{eqnarray*}
satisfying $\frac{d}{dt}D^k_j(t) = -\gamma D^k_j(t)$ with the stated initial condition.

\paragraph{Recursive step ($i<k$)} For an individual with $k-i$ remaining susceptible contacts, three mutually exclusive events can occur on $[\tau,t]$: (a) no infection event, with the lineage simply surviving at total rate $\gamma+(k-i)\beta$; (b) an infection occurs at time $c\in(\tau,t)$, the parent advances to state $i+1$ and produces the sampled tip, while the newborn in state $0$ and all its descendants remain unobserved; (c) the same infection event occurs, but the newborn in state $0$ produces the sampled tip while the parent in state $i+1$ and all its descendants remain unobserved.

For event (a), the survival probability density decays as before, now at the combined rate of recovery and infection:
\begin{eqnarray*}
\phi(t) = \sigma\,e^{-(\gamma+(k-i)\beta)(t-\tau)}.
\end{eqnarray*}

For events (b) and (c), an infection occurs at time $c$ with probability density $(k-i)\beta\,e^{-(\gamma+(k-i)\beta)(t-\tau-c)}$. In event (b), the parent lineage (now in state $i+1$) must produce the sampled tip, giving factor $D^{i+1}_j(c)$, while the newborn clade (in state $0$) must remain unobserved, giving factor $E_0(c)$. In event (c), the roles are reversed: the parent and its descendants remain unobserved ($E_{i+1}(c)$), while the newborn clade produces the sampled tip ($D^0_j(c)$). Integrating over all possible infection times and summing the three contributions,
\begin{eqnarray*}
D^i_j(t) &=& \sigma\,e^{-(\gamma+(k-i)\beta)(t-\tau)}\chi_i(j) \\
&& + \int_0^{t-\tau}(k-i)\beta\,e^{-(\gamma+(k-i)\beta)(t-\tau-c)}\bigl[E_0(c)\,D^{i+1}_j(c) + E_{i+1}(c)\,D^0_j(c)\bigr]\,dc.
\end{eqnarray*}
Differentiating in $t$ eliminates the integral and yields \eqref{eq:D_fixed}, with the boundary conditions $E_{k+1}(t)=0$ and $D^{k+1}_j(t)=0$.
\end{proof}

\subsection{Generalisation to Random Degree}
\label{sec:random-degree}

Up to this point, our derivations assumed that each individual has a fixed contact degree $k$. This assumption allowed us to index types solely by the number of already infected contacts $i$, writing $E_i(t) \equiv E_{(i,k)}(t)$ and $D^i_j(t) \equiv D^{(i,k)}_{(j,k)}(t)$ with $k$ implicit. In real populations, however, individuals may differ in their total number of effective contacts.

As introduced in the previous section, we now assume that an individual's degree is random. We aim to understand how to adapt equations~\ref{eq:E_fixed} and~\ref{eq:D_fixed} in this case. Thereto, we again assume that the $i$ individual has a fixed degree $k$. However, we understand that we do not know the degree of any downstream individual. We introduce two newborn mixing terms $\widehat{E}_0(t)$ and $\widehat{D}_j^0$ to be, respectively,
the probability that a newly infected individual (conditioning
on the type, but not on the degree) will be unobserved for a time interval
$[0, t]$ together with the clade it might produce during this time interval,
and the probability density that the clade (regardless of the degree
of the newly infected downstream) at time $0$ produces at time $t$ an observation of
type $j$, while no other observation was triggered in the time interval
$[0, t]$.

Analytically, we condition on a latent degree $k$ for the focal lineage so that rates such as $(k-i)\beta$ remain explicit and the ODEs stay closed. The degree of any newborn lineage is unknown at infection and is integrated out with the distribution of $K$.

\begin{prop}
\label{prop:E-random}
Let $E_{(i,k)}(t)$ denote the probability that a type $(i,k)$ individual and all its descendants remain unobserved up to time $t$ (with $E_{(i,k)}(0)=1$). Define the degree-averaged newborn extinction probability
\[
\widehat{E}_0(t) \;=\; \sum_{\ell \ge 0} w_\ell\, E_{0,\ell}(t),
\quad\text{and more generally}\quad
\widehat{E}_i(t) \;=\; \sum_{\ell \ge i} \mathbb{P}(K=\ell \mid K \ge i)\, E_{i,\ell}(t).
\]
Then the fixed degree system generalises to the random case
\begin{equation}
\frac{d}{dt} E_{(i,k)}(t) \;=\;
\mu - \bigl(\gamma  + (k-i)\beta \bigr) E_{(i,k)}(t)
+ (k-i)\beta \, \widehat{E}_0(t)\, E_{(i+1,k)}(t),
\quad i=0,\dots,k,\;\; E_{k+1,k}\equiv 0.
\label{eq:E_random}
\end{equation}
\end{prop}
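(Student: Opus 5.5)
The plan is to repeat the first-step (backward Kolmogorov) argument of the fixed-degree proposition for $E_i(t)$, changing only the factor attached to the newborn lineage. I would fix the focal individual's latent degree $k$, which is known once we condition on the type $(i,k)$, and again argue by downward recursion in $i$, from $i=k$ to $i=0$. In the base case $i=k$ no infections can occur. The computation in the proof of~\eqref{eq:E_fixed} carries over verbatim, giving $E_{(k,k)}(t)=\mu/\gamma+(\sigma/\gamma)e^{-\gamma t}$ and hence $\frac{d}{dt}E_{(k,k)}=\mu-\gamma E_{(k,k)}$. This agrees with~\eqref{eq:E_random} under the convention $E_{(k+1,k)}\equiv 0$, since the coefficient $(k-i)\beta$ vanishes anyway.

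For $i<k$ I would condition on the first event of the focal lineage. Its waiting time is exponential with rate $\gamma+(k-i)\beta$. If the first event is an unobserved recovery (rate $\mu$), the clade is unobserved for good. If no event occurs by time $t$, the clade is also unobserved. If the first event is an observed recovery, it is excluded. If the first event is an infection at time $a$, density $(k-i)\beta e^{-(\gamma+(k-i)\beta)a}$, then two lineages emerge: the parent in type $(i+1,k)$, with the same known $k$, and a newborn of type $(0,K')$ with $K'\sim w_\ell$ drawn independently of everything else. The key step is the strong Markov property at the infection time together with conditional independence of the two subclades given their types. This holds because the contact tree is a tree: the two subclades use disjoint sets of downstream edges, and the degrees are i.i.d. Given $K'=\ell$, the newborn clade stays unobserved over the remaining time with probability $E_{(0,\ell)}$. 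Averaging over $\ell$ with weights $w_\ell$ by the law of total probability, and using that $K'$ is independent of the parent's subtree, the joint probability factorises as $\widehat E_0\cdot E_{(i+1,k)}$. This yields the integral equation
\[
E_{(i,k)}(t)=\frac{\mu}{\gamma+(k-i)\beta}+\frac{\sigma+(k-i)\beta}{\gamma+(k-i)\beta}e^{-(\gamma+(k-i)\beta)t}+\int_0^t (k-i)\beta\,e^{-(\gamma+(k-i)\beta)(t-a)}\widehat E_0(a)E_{(i+1,k)}(a)\,da ,
\]
which is the fixed-degree expression with $E_0$ replaced by $\widehat E_0$. Differentiating in $t$ as before removes the integral and gives~\eqref{eq:E_random}.

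The main obstacle is the rigorous justification of this factorisation. Unlike the fixed-degree case, the system is now infinite (one equation per $\ell$), and $\widehat E_0$ couples all of them. So the recursion in $i$ for a fixed $k$ is no longer self-contained: $E_{(i,k)}$ depends on $E_{(0,\ell)}$ for every $\ell$, including $\ell>k$. I would handle this by treating the whole family $\{E_{(i,\ell)}\}$ as one solution of the infinite ODE system. All functions lie in $[0,1]$ and the right-hand side is Lipschitz on bounded sets, provided $\mathbb{E}K<\infty$ so that the rates $(\ell-i)\beta$ weighted by $w_\ell$ are controlled; alternatively one can work with the integral equations directly. I would also note that the process is non-explosive when $\mathbb{E}K<\infty$, since it is then dominated by a Galton--Watson process with finite mean, so the first-step decomposition is legitimate. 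Finally, the more general $\widehat E_i$ in the statement is just a conditional average defined for later use and plays no role in the derivation.
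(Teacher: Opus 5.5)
Your proposal is correct and follows the paper's route. The paper's proof is the fixed-degree first-step derivation with the newborn factor $E_{(0,k)}$ replaced by the mixture $\widehat E_0$ and the focal lineage kept conditional on its latent $k$; your integral equation, with the factorisation into $\widehat E_0\,E_{(i+1,k)}$ justified by independence of $K'$ and the branching property, simply makes that substitution explicit.

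The well-posedness of the infinite system that you call the main obstacle is not needed for the statement, since the integral equation for the probabilistically defined $E_{(i,k)}$ already gives the ODE. Your Lipschitz remark is also wrong as stated: the coefficients $(\ell-i)\beta$ multiplying $E_{(i,\ell)}$ are unbounded in $\ell$ regardless of $\mathbb{E}K<\infty$, so a uniqueness argument would have to work with the integral (variation-of-constants) form rather than a Lipschitz ODE theorem.
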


\begin{proof}
In the fixed degree case, the infection term uses $E_{(0,k)}(t)$ because a newborn inherits the same degree $k$ as the parent. With random degree, the newborn's degree is distributed as $K'$ and unknown at infection; hence $E_{0,k}(t)$ is replaced by the mixture $\widehat{E}_0(t)$. All other terms (recovery/observation and the parent's own transition $i\to i+1$) concern the focal lineage treated conditionally on its latent $k$. Marginalisation over $k$ is deferred to the likelihood stage. \qedhere
\end{proof}

\begin{prop}
\label{prop:random-D}
For each $0<\tau\le t$, let $D^{(i,k)}_j(t)$ denote the probability density
that a clade rooted at time $0$ in state $(i,k)$ produces, at time $\tau$,
exactly one observation of type $j$ (the degree of the observed individual
is integrated out), with no other observation in $(0,t]$. Then, with the
degree-averaged newborn density
$\widehat{D}^{\,0}_j(t)=\sum_{\ell\ge 0}w_\ell\,D^{(0,\ell)}_j(t)$,
\begin{eqnarray}
\frac{d}{dt}D^{(i,k)}_j(t)
=-\bigl(\mu+\sigma+(k-i)\beta\bigr)D^{(i,k)}_j(t)
+(k-i)\beta\,\widehat{E}_0(t)\,D^{(i+1,k)}_j(t)
+(k-i)\beta\,E_{(i+1,k)}(t)\,\widehat{D}^{\,0}_j(t),
\label{eq:D_random}
\end{eqnarray}

for $i=0,\dots,k$, with $E_{k+1,k}\equiv 0$, $D^{(k+1,k)}_j\equiv 0$, and
initial condition $
D^{(i,k)}_j(\tau)\;=\;\sigma\,\chi_i(j).
$
\end{prop}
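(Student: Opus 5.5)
The plan mirrors the proof of the fixed-degree $D^i_j$ equation \eqref{eq:D_fixed}. As in Proposition~\ref{prop:E-random}, we condition on the latent degree $k$ of the focal lineage and treat every newborn degree as an independent draw from $w_\ell$. We first do a first-step (last-event) decomposition on $[\tau,t]$ to get an integral equation for $D^{(i,k)}_j(t)$, then differentiate in $t$. The argument runs by downward induction on $i$, from $i=k$ to $i=0$, with $k$ fixed throughout.

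\textbf{Base case $i=k$.} No infection can occur, so the lineage only survives at rate $\gamma=\mu+\sigma$ and is sampled at $\tau$. This gives $D^{(k,k)}_j(t)=\sigma e^{-\gamma(t-\tau)}\chi_k(j)$. It is consistent with \eqref{eq:D_random}, since both coupling terms carry the factor $(k-i)\beta=0$.

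\textbf{Recursive step $i<k$.} Suppose the first transmission occurs at time $c$. This happens with density $(k-i)\beta e^{-(\gamma+(k-i)\beta)(\cdot)}$, exactly as before. After it, the parent moves to $(i+1,k)$ and keeps its degree $k$. The newborn starts in type $0$ with an independent degree $K'\sim w_\ell$. By the strong Markov property and the independence of the two subclades given the infection time, exactly one of them produces the single sampled tip of type $j$, and the other stays unobserved. This gives two contributions:
\begin{itemize}
\item Case (b), the parent is sampled: factor $D^{(i+1,k)}_j(c)$ times $\mathbb{E}[E_{(0,K')}(c)]=\sum_\ell w_\ell E_{(0,\ell)}(c)=\widehat{E}_0(c)$.
\item Case (c), the newborn is sampled: factor $E_{(i+1,k)}(c)$ times $\mathbb{E}[D^{(0,K')}_j(c)]=\widehat{D}^{\,0}_j(c)$.
\end{itemize}
Here $E_{(0,\ell)}$ comes from Proposition~\ref{prop:E-random}. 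Adding the no-transmission term $\sigma e^{-(\gamma+(k-i)\beta)(t-\tau)}\chi_i(j)$ and integrating over $c$ gives a variation-of-constants formula. Differentiating it in $t$ removes the integral and yields \eqref{eq:D_random}. Since the formula holds at $t=\tau$, the initial condition $D^{(i,k)}_j(\tau)=\sigma\chi_i(j)$ follows.

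\textbf{Main obstacle: replacing the newborn factors by the mixtures.} The replacement of the newborn factors by $\widehat{E}_0$ and $\widehat{D}^{\,0}_j$ must be justified. We do this by conditioning on $K'$ and using that $K'$ is independent of the parent's future and of everything in the parent's subclade. The expectation of the product then factorises, and the newborn part becomes an average over $\ell$. We also need the sums over $\ell$ to converge and to commute with the $c$-integral and the $t$-derivative. Both follow by dominated convergence: $0\le E\le 1$, $\sum_\ell w_\ell=1$, and $D^{(0,\ell)}_j$ is bounded uniformly in $\ell$ by $\sigma$ on compact $t$-intervals. The last bound follows from the integral representation, via a Gronwall-type estimate on the coupled system.

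The remaining subtlety is that the system for $\widehat{D}^{\,0}_j$ is infinite and self-referential through all degrees $\ell$. We will therefore state the result as holding for the unique bounded solution of the infinite system, obtained by Picard iteration in the sup norm over $(\ell,i)$. Uniqueness follows because the coupling coefficients $(\ell-i)\beta$ appear only as generators of the decay within each degree block. In the integral form the solution is contractive on short intervals, uniformly in $\ell$, because $\frac{(\ell-i)\beta}{\gamma+(\ell-i)\beta}\le 1$. We then extend to all of $[\tau,t]$ by stepping forward.
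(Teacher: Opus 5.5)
Your core derivation matches the paper's argument. You decompose at the first transmission, keep the parent's degree $k$, condition on the newborn's degree $K'$ and use its independence so the newborn factors average to $\widehat E_0$ and $\widehat D^{\,0}_j$, then apply variation of constants. The paper simply substitutes $E_{(0,k)}\mapsto\widehat E_0$ and $D^{(0,k)}_j\mapsto\widehat D^{\,0}_j$ in the fixed-degree proof, and your conditioning on $K'$ is the right justification of that substitution.

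The trouble is in your ``main obstacle'' paragraph, where two analytic claims are wrong.

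\textbf{The uniform bound.} The bound $D^{(0,\ell)}_j\le\sigma$ is false for large $\ell$. $D$ is a density in the sampling time, and it sums over every lineage alive at $\tau$ that could be the unique sampled one. Take $\tau$ small, $t=2\tau$ and $1\ll\ell\beta\tau\ll1/(\sigma\tau)$. Then about $\ell\beta\tau$ type-$0$ newborns of the focal are alive at the sampling time, and all other lineages escape sampling with probability close to one. Hence $D^{(0,\ell)}_0(t)\approx\sigma\ell\beta\tau\gg\sigma$. A Gronwall estimate on block $\ell$ would only give constants growing with $\ell\beta$. What does hold is a $\tau$-dependent bound. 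Each other lineage alive at the sampling instant must at least avoid being sampled itself during the remaining $\tau$. So $E_{(i',\ell')}(\tau)\le q:=1-\frac{\sigma}{\gamma}(1-e^{-\gamma\tau})<1$ for all types, and therefore $D^{(i,\ell)}_j(t)\le\sigma\sup_{n\ge1}nq^{n-1}$ for all $i,\ell,t$. This is exactly where $\tau>0$ enters. It gives the continuity of $\widehat D^{\,0}_j$ you need in order to differentiate. The interchange of $\sum_\ell$ with the $c$-integral needs no bound at all, by Tonelli.

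\textbf{The contraction.} Your contraction claim is unjustified, because $(\ell-i)\beta$ multiplies both gain terms, not only the decay. Bounding the kernel mass $\frac{(\ell-i)\beta}{\gamma+(\ell-i)\beta}\bigl(1-e^{-(\gamma+(\ell-i)\beta)h}\bigr)$ by $1$ therefore only yields the Lipschitz bound $\sup\bigl(\widehat E_0+E_{(i+1,\ell)}\bigr)$, which is close to $2$ when $\tau$ is small. Shortening the interval does not shrink that kernel mass uniformly in $\ell$, since it tends to $1$ as $\ell-i\to\infty$.

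If you want well-posedness, solve each finite triangular block for a given $\widehat D^{\,0}_j$ and iterate only on that scalar function. On an interval of length $h$, the linear part of $\widehat D^{\,0}_j\mapsto D^{(0,\ell)}_j$ has sup-norm at most the expected number of transmissions of the focal, which is $\le\ell\beta h$. The averaged map then contracts with constant $\le\mathbb{E}[K]\beta h$.

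Uniqueness is not actually part of the proposition, which only asserts that the probabilistically defined density satisfies the ODE. So there is no need to restate the result for ``the unique bounded solution''. For the truncated degree distribution used in the application ($K\le K_{\max}$), the system is finite and none of these issues arise.
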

\begin{proof}
In the fixed degree case, the two infection pathways involve $E_{(0,k)}D^{(i+1,k)}_{(j,k)}$ (newborn unobserved; parent later produces the sample) and $E_{(i+1,k)}D^{(0,k)}_{(j,k)}$ (parent unobserved; newborn produces the sample). Under random degree, both newborn terms are degree-averaged because the newborn's degree is unknown at infection, so $E_{(0,k)}\mapsto \widehat{E}_0$ and $D^{(0,k)}_{(j,k)}\mapsto \widehat{D}^0_j$. The lineage $i$ remains conditional on its latent $k$, hence $E_{(i+1,k)}$ and $D^{(i+1,k)}_j$ remain conditional on $k$. \qedhere
\end{proof}

\begin{figure}
	\centering
		\includegraphics[width=6cm]{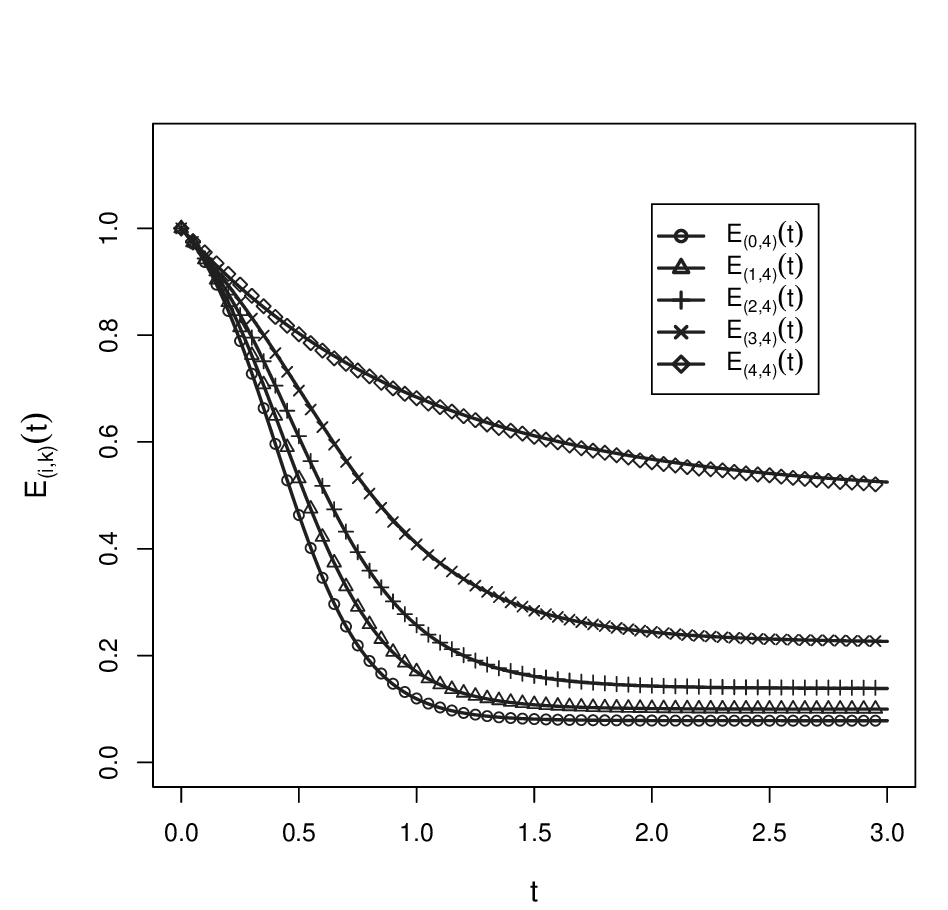}
	\caption{The probability $E_{(i,k)}(t)$ for a lineage and all of its descendants to go unobserved after time $t$. Gray thick lines: simulated trees for the different types, other points symbols for ODE results respectively - square:  $E_{0,k}(t)$, circle:  $E_{1,k}(t)$, triangle point up:  $E_{2,k}(t)$, plus: $E_{3,k}(t)$, cross:  $E_{k,k}(t)$. Choice of parameters: $\mu, \sigma = 0.5$, $\beta = 1.5$, $k = 4$ (fixed deterministic degree).}
	\label{fig:Eik-fixed}
\end{figure}

\begin{figure}
\centering
\includegraphics[width=0.95\textwidth]{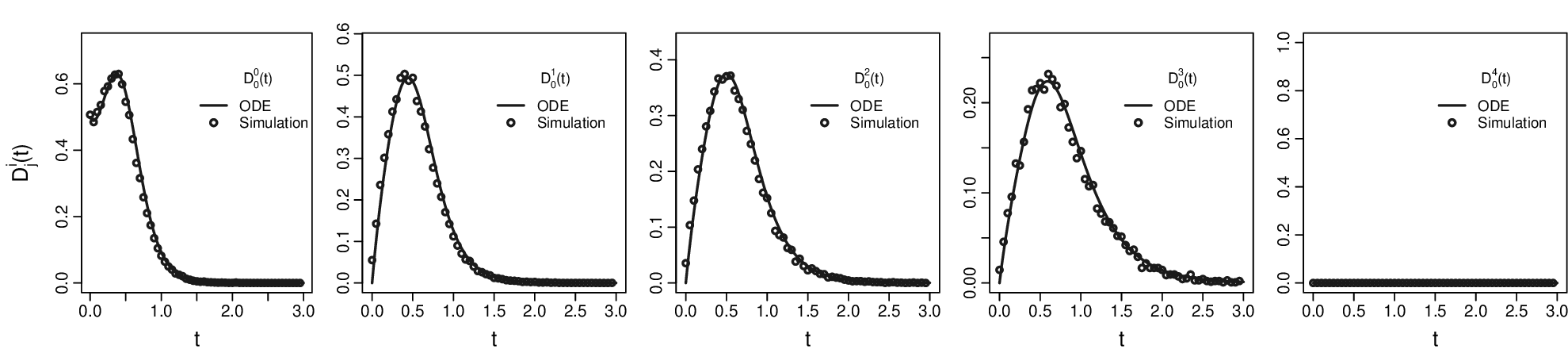}
\includegraphics[width=0.95\textwidth]{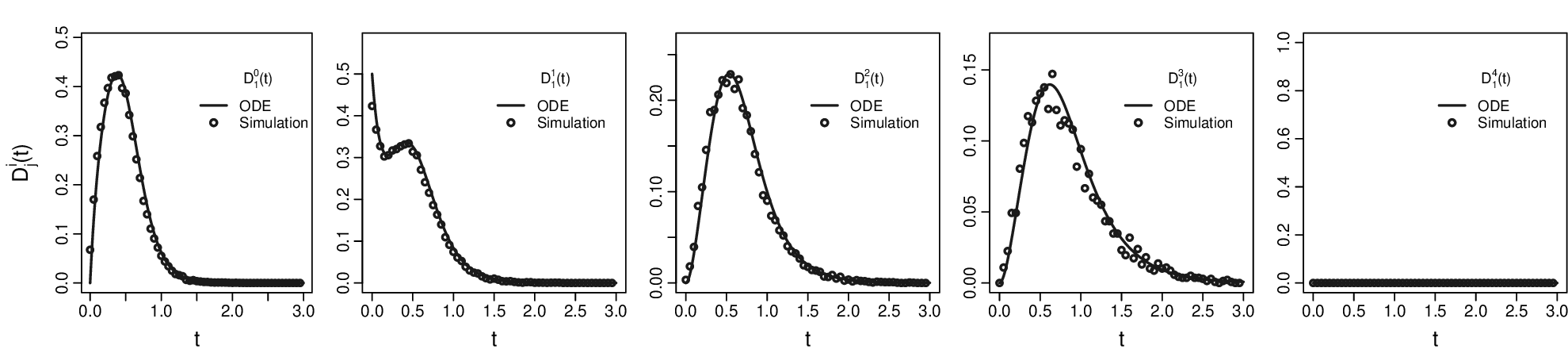}
\includegraphics[width=0.95\textwidth]{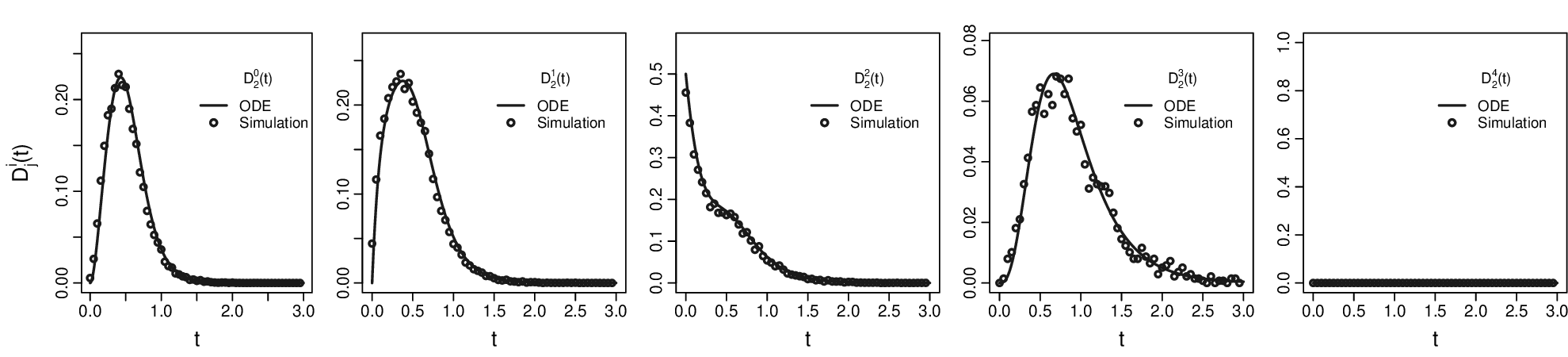}
\includegraphics[width=0.95\textwidth]{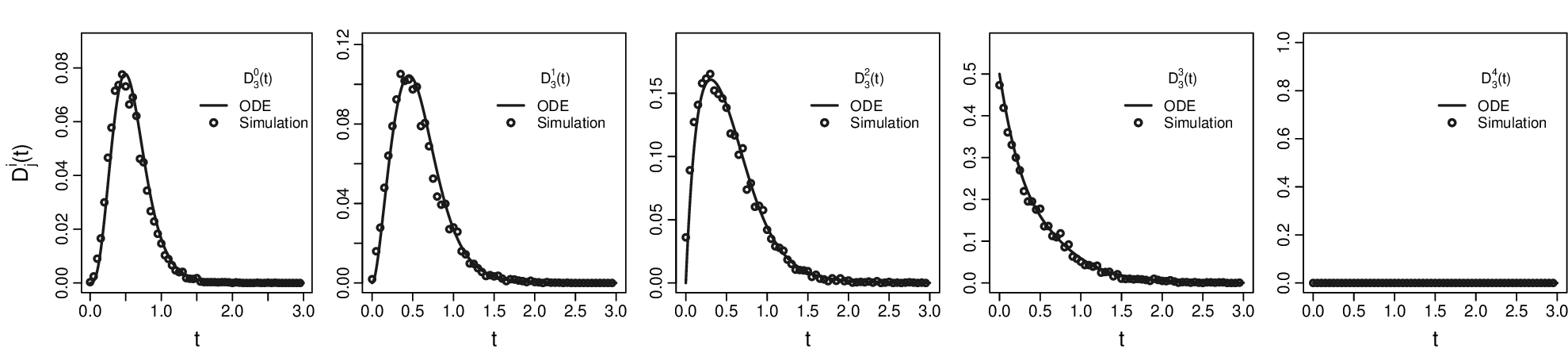}
\includegraphics[width=0.95\textwidth]{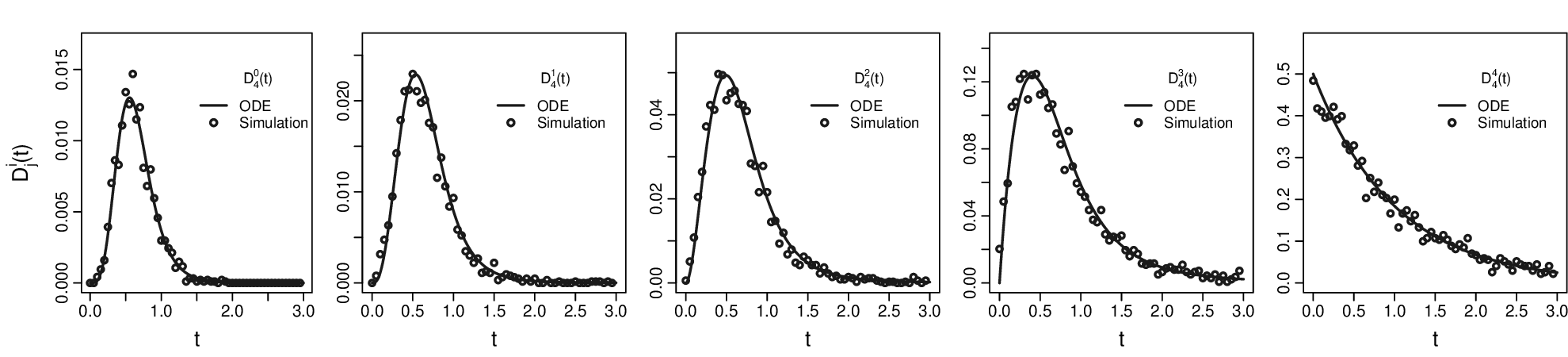}
\caption{The probability density $D_{j}^{(i,k)}(t)$ for a clade and all its descendants to have evolved as the observed sampled tree. Simulated trees (black circles), ODE theory results (black solid lines). Choice of parameters: $\mu, \sigma = 0.5$, $\beta = 1.5$, $k = 4$ (fixed deterministic degree).}
\label{fig:Dik-fixed}
\end{figure}

Figures~\ref{fig:Eik-fixed}--\ref{fig:Dik-fixed} compare the
analytical solutions of the survival probabilities $E_{(i,k)}(t)$ and the
sampling densities $D_{j}^{(i,k)}(t)$ to simulated trajectories under a fixed deterministic degree. Across all types $i$ and observed leaf types $j$, the theory agrees with simulations. The probability $E_{(i,k)}(t)$ that a clade remains unobserved depends strongly on the number of already infected downstream nodes
(Fig.~\ref{fig:Eik-fixed}). It is lowest at $i=0$, where all $k$
downstream edges are still available for transmission and any
descendant in the clade can trigger detection, and increases
monotonically with $i$ as the number of available transmission
pathways $(k-i)$ shrinks. At $i=k$, all downstream edges are
already infected; no further transmission is possible, giving the highest probability of going
unobserved.

For the saturated type $i=k$ of $D_{j}^{(i,k)}(t)$ (See last column of figure~\ref{fig:Dik-fixed}), the node has no remaining transmission
capacity, so the clade cannot grow and only the trajectory $j=k$
carries positive density, and $D_{j}^{(k,k)}(t) = 0$ for $j \neq k$, with the entire mass concentrated on $j=k$ corresponding to the evolution of the $i$ node itself. For the non-saturated type $i \in
\{0, 1, \dots, k-1\}$ (see the first four columns in figure~\ref{fig:Dik-fixed}), transmission events are still possible, and
$D_{j}^{(i,k)}(t)$ takes positive values for the full range of
observable leaf types $j$, in agreement with the simulated
trajectories.

\subsection{Stationary Distribution}
\label{subsection:stationary_distribution}

To evaluate the likelihood at a node of type $(i,k)$, we require the equilibrium frequencies $\pi_{i,k}$. A fundamental property of the continuous-time Markov process is that its stationary distribution is determined solely by the transition rates and not by which state a particular branch tip occupies.

\begin{prop}
\label{prop:stationary-random}
Let $n_{i,k}(t)$ denote the expected number of nodes of type $(i,k)$, $0\leq i\leq k$, at time $t$. The evolution is governed by the system of differential equations
\begin{eqnarray}
\frac{d}{d t} n_{0,k}(t) & = & -(\gamma + k\beta)\, n_{0,k}(t) \;+\; w_k\,\beta\, \sum_{k'} \sum_{j=0}^{k'-1}(k'-j)\, n_{j,k'}(t), \nonumber\\
\frac{d}{d t} n_{i,k}(t) & = & -\bigl(\gamma + (k-i)\beta\bigr)\, n_{i,k}(t) \;+\; (k-i+1)\,\beta\, n_{i-1,k}(t), \quad i\geq 1.
\label{eq:stationary_distribution_ode_random}
\end{eqnarray}
In the exponentially growing phase with rate $r$, the equilibrium frequencies satisfy, for each degree class $k$,
\[
\pi_{i,k} \;=\; \frac{(k-i+1)\beta}{\gamma + (k-i)\beta + r}\,\pi_{i-1,k}, \qquad i\geq 1,
\]
which can be expressed in product form
\[
\pi_{i,k} \;=\; \pi_{0,k}\,\prod_{j=1}^{i}\frac{(k-j+1)\beta}{\gamma + (k-j)\beta + r}, \qquad i=1,\ldots,k,
\]
with degree-conditional frequencies $\pi_{i\mid k}=\pi_{i,k}\big/\sum_{j=0}^{k}\pi_{j,k}$ obeying the same recursion. The growth rate $r$ is the unique positive solution of the scalar eigenvalue equation
\begin{equation}
1 \;=\; \beta\, \sum_{k} \frac{w_k}{\gamma + k\beta + r}\left[\,k \;+\; \sum_{j=1}^{k-1}(k-j) \prod_{i=1}^{j} \frac{(k-i+1)\beta}{\gamma + (k-i)\beta + r}\,\right].
\label{eq:rnd-eigen-main}
\end{equation}

\begin{proof}
See section 1 in the supplementary material.
\end{proof}
\end{prop}

\begin{prop}
\label{prop:spectral-gap-random}
Let $\mathbf{n}(t)$ solve the system~\eqref{eq:stationary_distribution_ode_random} on the joint type space $\{(i,k):0\leq i\leq k\}$, with generator matrix $A$, and let $r$ be its dominant eigenvalue. Denote the equilibrium distribution by $\boldsymbol{\pi}=(\pi_{i,k})$ and define the spectral gap
\[
\delta \;=\; r - \max_{\ell \geq 1} \Re\lambda_\ell \;>\; 0,
\]
where $\{\lambda_\ell\}$ are the eigenvalues of $A$. Then for some constant $C>0$,
\[
\bigl\|\mathbf{p}(t) - \boldsymbol{\pi}\bigr\| \;\leq\; C\,e^{-\delta t},
\qquad
\mathbf{p}(t) \;=\; \frac{\mathbf{n}(t)}{\sum_{k'}\sum_{j=0}^{k'} n_{j,k'}(t)},
\]
so that the joint frequency vector $\mathbf{p}(t)$ converges exponentially fast to $\boldsymbol{\pi}$ at rate $\delta$, independently of the initial condition.
\end{prop}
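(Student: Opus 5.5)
We prove the statement with the standard Perron--Frobenius argument for a linear system whose generator is Metzler and irreducible. We first treat a bounded degree support $k\le k_{\max}$, so the type space $\{(i,k)\}$ is finite and $A$ is a finite matrix. If $w$ has unbounded support, the joint space is infinite and we must instead work on a weighted $\ell^1$ space, or truncate the degrees and pass to the limit. That is a genuine extra step, and we would state it as an assumption or handle it at the end.

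\emph{Structure of $A$.} The off-diagonal entries of $A$ in \eqref{eq:stationary_distribution_ode_random} are nonnegative: $(k-i+1)\beta$ on the subdiagonal within each degree class, and $w_k(k'-j)\beta$ feeding $(0,k)$ from every $(j,k')$ with $j<k'$. So $A$ is Metzler. Restrict to the set $\mathcal T$ of types reachable from newborns, namely $(i,k)$ with $w_k>0$. On $\mathcal T$ the matrix is irreducible, since every $(i,k)$ with $i<k$ feeds every $(0,k')$, and $(0,k)$ reaches every $(i,k)$ along the chain. The saturated states $(k,k)$ and states of degree $0$ have no outflow into other types. They only decay at rate $\gamma$, so they form transient, non-communicating classes whose eigenvalue is $-\gamma$.

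\emph{Perron eigenvalue.} Pick $s>0$ large so that $A+sI\ge 0$. Perron--Frobenius applied to this matrix, restricted to the irreducible block, gives a simple real dominant eigenvalue $r$ with strictly positive left and right eigenvectors. The right eigenvector is $\boldsymbol\pi$, and it agrees with the product form in Proposition~\ref{prop:stationary-random}. The eigenvalue $-\gamma$ of the saturated and degree-$0$ classes satisfies $-\gamma<r$, because by the eigenvalue equation \eqref{eq:rnd-eigen-main} $r$ is positive and in any case exceeds $-\gamma$. Hence $r$ is strictly dominant in real part and the gap $\delta>0$. Strictness among the remaining eigenvalues comes from Perron--Frobenius for irreducible Metzler matrices: every other eigenvalue has real part strictly below $r$, since the shifted matrix $A+sI$ is primitive (it has a positive diagonal) and so its Perron root strictly dominates all other eigenvalues in modulus.

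\emph{Convergence.} Use the spectral decomposition $e^{At}=e^{rt}\boldsymbol\pi\mathbf v^\top+R(t)$, where $\mathbf v$ is the positive left eigenvector normalised by $\mathbf v^\top\boldsymbol\pi=1$. The remainder satisfies $\|R(t)\|\le C'(1+t)^{m}e^{(r-\delta)t}$, with $m$ bounded by Jordan block sizes, and this polynomial factor is absorbed by shrinking $\delta$ slightly or by taking $\delta$ as the minimum of the gaps. Then $\mathbf n(t)=e^{rt}c\,\boldsymbol\pi+R(t)\mathbf n(0)$ with $c=\mathbf v^\top\mathbf n(0)$. Here $c>0$ for any nonnegative, nonzero initial condition that charges $\mathcal T$. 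An initial mass only on saturated types simply dies out, and then the ratio is undefined, so we exclude that case. Normalising by the total mass $\mathbf 1^\top\mathbf n(t)=e^{rt}c(1+O(e^{-\delta t}))$ gives $\|\mathbf p(t)-\boldsymbol\pi\|\le Ce^{-\delta t}$, where $\boldsymbol\pi$ is normalised to sum one. The constant $C$ depends on $\mathbf n(0)$ only through $c$ and $\|\mathbf n(0)\|$; the ``independent of the initial condition'' clause refers to the limit and the rate.

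\emph{Main obstacle.} The delicate points are the reducible saturated and degree-$0$ classes, which we handle via the block-triangular structure, and the infinite-support case. For the latter we would try to show that the tail of the degree distribution contributes a compact perturbation. Under a moment condition on $w$, such as $\sum_k k\,w_k<\infty$, the rank-one recruitment operator is compact. The within-class operators are then block-diagonal with eigenvalues $-(\gamma+(k-i)\beta)\le-\gamma$. A Krein--Rutman-type argument would then give an isolated dominant eigenvalue $r$ with a gap at least $\min\{r+\gamma,\ \cdot\}$.
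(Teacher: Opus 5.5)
Your overall route (Perron--Frobenius on the irreducible Metzler block, block-triangular treatment of the saturated types, spectral decomposition, then normalisation) is the natural one. Your handling of the initial condition is essentially right: data supported only on saturated types never approach $\boldsymbol\pi$, and $C$ must depend on $\mathbf n(0)$.

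The gap is the rate. The proposition asserts $\|\mathbf p(t)-\boldsymbol\pi\|\le Ce^{-\delta t}$ with $\delta$ exactly the spectral gap. Your Jordan-form estimate only gives $C(1+t)^m e^{-\delta t}$. A polynomial factor cannot be absorbed into $C$, and ``shrinking $\delta$ slightly'' proves a weaker proposition. For the stated rate you need the eigenvalues with $\Re\lambda=r-\delta$ to be semisimple, and Perron--Frobenius says nothing about that. The worry is real in this family. Your remark that $r$ ``in any case exceeds $-\gamma$'' is false: for $w_1=1$ the generator is $A=\begin{pmatrix}-\gamma&0\\ \beta&-\gamma\end{pmatrix}$, a Jordan block with zero gap. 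So $r>-\gamma$ has to come from the growing-phase hypothesis $r>0$, and semisimplicity has to come from the specific structure of $A$.

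That structure is explicit. Write $A=B+\beta\,\mathbf w\mathbf c^\top$, where $B$ is the block-diagonal within-class part, $\mathbf w_{(i,k)}=w_k\mathbf 1[i=0]$, and $\mathbf c_{(i,k)}=k-i$ is the number of remaining susceptible downstream edges.

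A direct check gives $\mathbf c^\top B=-(\gamma+\beta)\mathbf c^\top$ and $\mathbf c^\top\mathbf w=\mathbb E[K]$, hence $\mathbf c^\top A=\bigl(\beta(\mathbb E[K]-1)-\gamma\bigr)\mathbf c^\top$. Pairing with $\boldsymbol\pi$ (note $\mathbf c^\top\boldsymbol\pi>0$) identifies this number as $r$. So $r>-\gamma$ iff $\mathbb E[K]>1$, and $\mathbf c$ is proportional to your $\mathbf v$. It vanishes on saturated types, so it is not strictly positive as you wrote.

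Now set $\mathbf m(t)=\mathbf n(t)-\alpha e^{rt}\boldsymbol\pi$ with $\alpha=\mathbf c^\top\mathbf n(0)/\mathbf c^\top\boldsymbol\pi$. Then $\mathbf c^\top\mathbf m(t)\equiv0$, the rank-one term drops out, and $\dot{\mathbf m}=B\mathbf m$. Since $B$ is Metzler with every column summing to $-\gamma$, $\|e^{Bt}\|_{\ell^1\to\ell^1}=e^{-\gamma t}$. Hence $\|\mathbf m(t)\|_1\le e^{-\gamma t}\|\mathbf m(0)\|_1$, and normalising gives $\|\mathbf p(t)-\boldsymbol\pi\|_1\le Ce^{-(r+\gamma)t}$ with no polynomial factor.

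The same identity shows that every eigenvector for $\lambda\neq r$ lies in $\ker\mathbf c^\top$ and is therefore an eigenvector of $B$. So $\sigma(A)\setminus\{r\}\subseteq\{-(\gamma+m\beta):m\ge0\}$, with $-\gamma$ attained by the saturated unit vectors. Thus $\delta=r+\gamma=\beta(\mathbb E[K]-1)$ and the bound holds at exactly the stated rate. The correct nondegeneracy condition is $\mathbf c^\top\mathbf n(0)>0$ (at least one susceptible downstream edge initially), rather than ``charges $\mathcal T$''.

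Because the contraction of $e^{Bt}$ is uniform in $k$, this is also a better starting point than Krein--Rutman for unbounded degree support. The paper's application truncates at $K_{\max}$, though, so the finite case is the one that matters.
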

\noindent Proof. See section 2 in the supplementary material.

\begin{cor}
\label{cor:Spi-random}
Let $S_\pi(k) \;=\; \sum_{i=0}^{k}(k-i)\,\pi_{i\mid k}$ denote the equilibrium-expected number of remaining susceptible downstream neighbours conditional on degree $k$, and let
\[
\mathbb{E}[S_\pi] \;=\; \sum_{k} w_k\, S_\pi(k) \;=\; \mathbb{E}[K] - \mathbb{E}[I],
\]
where $K\sim w_k$ is the random degree and $I$ counts already infected downstream neighbours.
The first limit below is taken within a fixed degree class as $k\to\infty$, and the second is the population average under a degree distribution $w_k$ whose mass concentrates on large degrees,
\[
S_\pi(k) \;\xrightarrow{\,k\to\infty\,}\; k-1, \qquad
\mathbb{E}[S_\pi] \;\approx\; \mathbb{E}[K]-1.
\]
This reproduces the classical SIR on tree behaviour, such that on average, one downstream neighbour of a typical lineage is already infected.

\begin{proof}
See section 3 in the supplementary material.
\end{proof}
\end{cor}

\subsection{Basic Reproductive Number}\label{subsec:r0}

For a contact network with random degree $K\sim w_k$ and per-edge transmission rate $\beta$, an infectious individual of degree $k$ has $k$ transmission opportunities and a mean infectious period $1/\gamma$. The standard per-node reproduction number conditional on degree is
\begin{equation}
R_0(k) \;=\; \frac{k\,\beta}{\gamma},
\qquad
R_0 \;=\; \mathbb{E}[R_0(K)] \;=\; \frac{\mathbb{E}[K]\,\beta}{\gamma}.
\label{eq:basic_r0_random}
\end{equation}

\paragraph{Type-dependent reproduction number}
For an infectious individual currently in state $(i,k)$, the $i$ already infected downstream neighbours represent secondary cases that have already occurred; they are part of the realised, not the future, offspring. The remaining $k-i$ downstream edges still lead to susceptible contacts, each transmitting at rate $\beta$ over a residual mean infectious period $1/\gamma$. We therefore define the type-dependent (residual) reproduction number
\begin{equation}
R_0(i,k) \;=\; \frac{(k-i)\,\beta}{\gamma},
\qquad i=0,1,\ldots,k,
\label{eq:type_r0_random}
\end{equation}
which counts only further secondary cases produced after the current type $(i,k)$. A naive count that includes the $i$ realised secondary cases would yield $i+(k-i)\beta/\gamma$; we drop the $i$ term because $R_0(i,k)$ is intended as a forward-looking quantity describing transmission potential going forward, not a tally of realised infections. In particular, $R_0(0,k)=k\beta/\gamma=R_0(k)$, recovering the per-degree basic reproduction number.

\paragraph{Equilibrium-weighted residual reproduction number}
In the exponentially growing phase, the types $(i,k)$ are populated according to the equilibrium distribution with degree-conditional frequencies $\pi_{i\mid k}=\pi_{i,k}/\sum_{j=0}^{k}\pi_{j,k}$, $\sum_{i=0}^{k}\pi_{i\mid k}=1$. Weighting the type-dependent residual reproduction number by these equilibrium frequencies gives the equilibrium-weighted residual reproduction number of a degree-$k$ lineage,
\begin{equation*}
\bar{R}_0(k)\;=\;\sum_{i=0}^{k}\pi_{i\mid k}\,R_0(i,k)
\;=\;\frac{\beta}{\gamma}\sum_{i=0}^{k}\pi_{i\mid k}(k-i)
\;=\;\frac{R_0(k)}{k}\,S_\pi(k).
\label{eq:Rbar0-k-random}
\end{equation*}
Averaging over the random degree gives the population-level equilibrium-weighted residual reproduction number
\begin{equation}
\bar{R}_0 \;=\; \mathbb{E}\bigl[\bar{R}_0(K)\bigr] \;=\; \sum_{k} w_k\,\bar{R}_0(k)
\;=\;\frac{\beta}{\gamma}\,\mathbb{E}[S_\pi].
\label{eq:Rbar0-random}
\end{equation}
$\bar{R}_0(k)$ counts only the remaining (future) secondary cases expected from a lineage of degree $k$ that has already infected $\mathbb{E}[I\mid k]=k-S_\pi(k)$ of its contacts. These quantities are the natural population-level summaries of realised transmission potential once the type distribution has equilibrated: because they discount the downstream contacts a lineage has already infected, they lie below the basic reproduction number $R_0$ and measure the forward transmission that remains, a hierarchy we return to when interpreting the Karnataka estimates in Section~\ref{sec:empirical-karnataka}.

\section{Likelihood}
\label{sec:likelihood}

\subsection{Construction of the tree likelihood}\label{sec:construction}

We now derive the likelihood of an observed sampled tree $\mathcal{T}$
under the structured transmission model introduced above. Each infected individual carries the augmented character type $(i,k)$, where $k$ is the downstream degree and $i$
records the number of already infected downstream contacts. When a
transmission event occurs from a lineage in state $(i,k)$, two daughter
lineages are produced with distinct roles. The continuing lineage is the same individual and, advances from $(i,k)$ to $(i+1,k)$, that is,
its degree is conserved. The newborn lineage is the newly infected
individual and always starts in state $(0,K')$ where $K'$ is an independent
draw from the degree distribution $w_k$.

Let $\hat{t}_0<0$ denote the root time, with time measured from past to
present and the present fixed at $t=0$. The sampled tree has $n\geq 1$
observed tips at times $\hat{t}_0\le t_1\le\cdots\le t_n\le 0$, with observed character type $j_\ell\in\{0,\dots,k\}$ at tip $\ell$; the degree of
each sampled individual is not observed. The tree contains $n-1$ internal
branching nodes at times $\tau_1,\dots,\tau_{n-1}$. The root state is
$(i_0,k)$ with $i_0$ unknown and $k$ the (latent) degree. These absolute times serve only to order the events of the tree and to define the elapsed length of each edge. Every $E$ and $D$ entering the likelihood is evaluated at such an elapsed length, a non-negative difference of two event times, which is exactly the relative time $t \ge 0$ of Section 2.1. The root, for instance, contributes through $E_{(i_0,k)}(-\hat t_0)$, where $-\hat t_0 > 0$ is the elapsed time from the root to the present.

\subsubsection{Edges of two kinds, governed by a single ODE}
\label{sec:two-edge-types}

Every edge of the sampled tree terminates at exactly one of an observation (a sampled tip) or a transmission (an internal
branching node) event. We describe the contribution of an edge by a single per-edge density
$D^{(i,k)}_{j}(t)$ that solves the ODE of Proposition~\ref{prop:random-D}
along the edge. The two edge types share this ODE and differ only in the
initial condition imposed at the lower (i.e.\ more recent) end of the
edge.

\paragraph{Tip-ending edge}
The edge starts in state $(i,k)$ at time $u$ and ends at a sampled tip in
observed state $j$ at time $v>u$. The terminating event is an observation,
which occurs at the instantaneous sampling rate $\sigma$ when the lineage
is in the observed state. The contribution of the edge to the likelihood
is
\[
D^{(i,k)}_{j}(v-u),
\qquad\text{with initial condition}\qquad
D^{(i,k)}_{j}(0)=\sigma\,\chi_i(j),
\]
where the elapsed time is measured from the tip backwards to the start of
the edge.

\paragraph{Branching-ending edge}
The edge starts in state $(i,k)$ at time $u$ and ends at an internal
branching event at time $v>u$, with the lineage in state $(j,k)$ just
before the branch. The terminating event is a transmission, which occurs
at rate $(k-j)\beta$. Along the edge itself the per-edge density also satisfies
the same ODE of Proposition~\ref{prop:random-D}, but the initial condition at
the lower end is no longer the simple $\sigma\,\chi_i(j)$ of a tip.
Instead, it is supplied recursively by the resulting subtrees attached at the branch, multiplied by the branching rate $(k-j)\beta$.

\subsubsection{Elementary building blocks of the likelihood}
\label{sec:building-blocks}

The likelihood of the sampled tree $\mathcal{T}$ factorises into a product
of edge contributions and internal node contributions, which we now
collect.

\paragraph{Terminal edge from a continuing lineage}
If the edge starts at $(i,k)$ at time $u$ and ends at a sampled tip $j$ at time $v>u$, its contribution is
\[
D^{(i,k)}_{j}(v-u).
\]

\paragraph{Terminal edge from a newborn lineage}
If the edge originates as a newborn at a branching event (with random
degree $K'$ integrated out) and ends at a sampled tip $j$ after elapsed time $v-u$, its contribution is
\[
\widehat{D}^{\,0}_{j}(v-u).
\]

\paragraph{Internal branching node}
Consider an internal branching event at time $\tau$ for a parent lineage
of type $(i,k)$. Let $\mathcal{T}_A$ and $\mathcal{T}_B$ denote the two
daughter subtrees emanating from this node, with elapsed times $\Delta_A$
and $\Delta_B$ along their respective root edges to the next event. At
the branch, the continuing daughter enters $(i+1,k)$ and the
newborn daughter enters $(0,K')$.
Because the sampled tree records only tip observations and does not
identify which daughter produced which subtree, both role assignments
must be summed. Writing $D^{(i+1,k)}_{\mathcal{T}_X}(\Delta_X)$ for the
likelihood contribution of subtree $\mathcal{T}_X$ assuming its root
edge is the continuing daughter (conserved degree $k$, state $i+1$),
and $\widehat{D}^{\,0}_{\mathcal{T}_X}(\Delta_X)$ for the corresponding
contribution assuming the root edge is the newborn daughter (degree
integrated out, type $0$), the branching node contributes
\begin{equation}\label{eq:branching-contribution}
(k-i)\beta\;\Bigl[\;
D^{(i+1,k)}_{\mathcal{T}_A}(\Delta_A)\,\widehat{D}^{\,0}_{\mathcal{T}_B}(\Delta_B)
\;+\;
D^{(i+1,k)}_{\mathcal{T}_B}(\Delta_B)\,\widehat{D}^{\,0}_{\mathcal{T}_A}(\Delta_A)
\;\Bigr].
\end{equation}
The first term corresponds to the role assignment in which the continuing
daughter carries $\mathcal{T}_A$ and the newborn daughter carries
$\mathcal{T}_B$; the second term is the reverse assignment. The notation $D^{(i+1,k)}_{\mathcal{T}_X}(\Delta_X)$ should be read
recursively. When $\mathcal{T}_X$ is a single observed tip $j$, this reduces to the terminal-edge contribution
$D^{(i+1,k)}_{j}(\Delta_X)$ given above; when $\mathcal{T}_X$ contains
further internal branching nodes, the contribution is built up from the
same set of building blocks applied to the substructure of $\mathcal{T}_X$.
The same recursive interpretation applies to
$\widehat{D}^{\,0}_{\mathcal{T}_X}(\Delta_X)$.

These local contributions assemble into the full tree likelihood.
The two trivial cases are immediate. In case (1), if the entire clade descending from the root leaves no sampled descendant by the present, $L(\mathscr{T}\mid\theta)=E_{(i_0,k)}(-\hat t_0)$. In case (2), if the tree contains a single sampled tip of type $j_1$ at time $t_1$, $L(\mathscr{T}\mid\theta)=D^{(i_0,k)}_{j_1}(t_1-\hat t_0)$. We now turn to case (3), a nontrivial worked example that exposes the general pattern.

\subsubsection*{Worked example: three observations}
We consider the tree in Figure~\ref{fig:tree3obs} with $n=3$ observed tips, two internal branching nodes and fixed topology.
Let the tips be observed as types $j_1,j_2,j_3$ at times
$t_1,t_2,t_3$ respectively, and let the branching times be $\tau_1$ and $\tau_2$,
with root at $\hat t_0$ and root degree $k$. We define the elapsed times
\[
\Delta_0=\tau_2-\hat t_0,\qquad
\Delta_1=\tau_1-\tau_2,\qquad
\Delta_{12}=t_1-\tau_2,\qquad
\Delta_{21}=t_2-\tau_1,\qquad
\Delta_{31}=t_3-\tau_1.
\]

\begin{figure}[ht]
\centering
\begin{tikzpicture}[
    scale=0.9,
    transform shape,
    >=Stealth,
    every node/.style={font=\footnotesize},
    tip/.style      = {circle, fill=black, inner sep=2.5pt},
    bnode/.style    = {circle, draw=black, fill=white,
                       line width=1.0pt, inner sep=2.5pt},
    rootnode/.style = {rectangle, fill=black, inner sep=3.0pt},
    cont/.style     = {thick, black},
    newb/.style     = {thick, dashed, red!70!black},
    taxs/.style     = {thin, black!50}
]

\node[tip, label=below:{$t_1,\;j_1$}] (t1) at (1.0, 0.0) {};
\node[tip, label=below:{$t_3,\;j_3$}] (t3) at (3.5, 0.0) {};
\node[tip, label=below:{$t_2,\;j_2$}] (t2) at (6.0, 0.0) {};

\node[bnode] (tau1) at (4.75, 3.0) {};
\node[bnode] (tau2) at (4.75, 5.2) {};

\node[rootnode,
      label={[font=\footnotesize]right:{$(i_0,k)$}}]
      (root) at (4.75, 7.0) {};

\draw[cont] (root) -- (tau2)
    node[left, midway, xshift=-3pt, font=\footnotesize]
        {$D^{(i_0,k)}_{i_0+1}(\Delta_0)$};

\draw[cont]
    (tau2) -- (tau1);
\node[left, font=\footnotesize, xshift=-3pt] at ($(tau2)!0.5!(tau1)$)
    {$D^{(i_0+1,k)}_{i_0+2}(\Delta_1)$};

\draw[newb]
    (tau2) -- (tau2 -| t1)
    -- (t1);
\node[right, font=\footnotesize, xshift=3pt]
    at ($(tau2 -| t1)!0.5!(t1)$)
    {$\widehat{D}^{0}_{j_1}(\Delta_{12})$};

\draw[cont]
    (tau1) -- (tau1 -| t2)
    -- (t2);
\node[right, font=\footnotesize, xshift=3pt]
    at ($(tau1 -| t2)!0.5!(t2)$)
    {$D^{(i_0+2,k)}_{j_2}(\Delta_{21})$};

\draw[newb]
    (tau1) -- (tau1 -| t3)
    -- (t3);
\node[left, font=\footnotesize, xshift=-3pt]
    at ($(tau1 -| t3)!0.5!(t3)$)
    {$\widehat{D}^{0}_{j_3}(\Delta_{31})$};

\node[right=6pt of tau2, red!70!black, font=\footnotesize]
    {$(k{-}i_0)\beta$};
\node[right=30pt of tau1, red!70!black, font=\footnotesize]
    {$(k{-}i_0{-}1)\beta$};

\draw[black!70, ->, line width=0.8pt] (9.5, 7.4) -- (9.5, -0.4)
    node[below, black!80, font=\footnotesize]{$t=0$ (present)};  
\node[black!80, font=\footnotesize, rotate=90] at (9.95, 3.5)  
    {past $\longrightarrow$};

\draw[black!50, thin, line width=0.6pt] (9.3,0.0) -- (9.5,0.0);
\node[left, black!80, font=\scriptsize] at (9.25,0.0) {$t_3,\;t_2,\;t_1$};
\draw[black!50, thin, line width=0.6pt] (9.3,3.0) -- (9.5,3.0);
\node[left, black!80, font=\scriptsize] at (9.25,3.0) {$\tau_1$};
\draw[black!50, thin, line width=0.6pt] (9.3,5.2) -- (9.5,5.2);
\node[left, black!80, font=\scriptsize] at (9.25,5.2) {$\tau_2$};
\draw[black!50, thin, line width=0.6pt] (9.3,7.0) -- (9.5,7.0);
\node[left, black!80, font=\scriptsize] at (9.25,7.0) {$\hat{t}_0$};

\begin{scope}[shift={(0.9,-2.2)}]
  \draw[cont] (0,0) -- (1.3,0)
      node[right, font=\footnotesize]
          {continuing lineage};
  \draw[newb] (0,-0.7) -- (1.3,-0.7)
      node[right, font=\footnotesize]
          {newborn lineage};
  \node[tip,      label=right:{\footnotesize sampled tip}]     at (4.8, 0.0){};
  \node[bnode,    label=right:{\footnotesize infection event}] at (4.8,-0.7){};
  \node[rootnode, label=right:{\footnotesize root}]            at (8.0,0.0){};
\end{scope}
\end{tikzpicture}
\caption{Sampled tree for Case~(3) with $n=3$ observations: time runs vertically downward.
  The root individual ($(i_0,k)$, time $\hat{t}_0$) runs as a single vertical stem until its first infection at $\tau_2$, which
  occurs at rate $(k-i_0)\beta$ and advances it to $(i_0+1,k)$.
  A newborn (type $0$, unknown degree, dashed) branches left toward tip $t_1$;
  its edge uses $\widehat{D}^0_{j_1}$ because the newborn's degree is integrated out.
  The continuing lineage (type $(i_0+1,k)$) runs to $\tau_1$, where a second
  infection at rate $(k-i_0-1)\beta$ advances it to $(i_0+2,k)$;
  another newborn (state $0$) branches toward tip $t_3$ using $\widehat{D}^0_{j_3}$,
  while the continuing lineage reaches tip $t_2$ with $D^{(i_0+2,k)}_{j_2}$.
  }
\label{fig:tree3obs}
\end{figure}

At the more recent branching event $\tau_1$, the parent lineage is of type $(i_0+1,k)$, so the infection occurs at rate
$(k-i_0-1)\beta$. The continuing daughter enters $(i_0+2,k)$ and
the newborn daughter enters $0$ with unknown degree. The contribution, summing over both role assignments, is
\begin{equation}
\label{eq:L4-final}
L_4
=
(k-i_0-1)\beta
\left[
D^{(i_0+2,k)}_{j_2}(\Delta_{21})\,\widehat{D}^{0}_{j_3}(\Delta_{31})
+
D^{(i_0+2,k)}_{j_3}(\Delta_{31})\,\widehat{D}^{0}_{j_2}(\Delta_{21})
\right].
\end{equation}

At the older branching event $\tau_2$, the parent lineage is of type $(i_0,k)$, so the infection occurs at rate $(k-i_0)\beta$. The continuing
daughter enters $(i_0+1,k)$ and the newborn daughter enters
$0$ with unknown degree. One daughter leads to the subtree rooted at $\tau_1$, while the
other leads to tip $(t_1,j_1)$. Summing over both assignments gives
\begin{equation}
\label{eq:L5-final}
L_5
=
(k-i_0)\beta
\left[
D^{(i_0+1,k)}_{\,i_0+2}(\Delta_1)\,\widehat{D}^{0}_{j_1}(\Delta_{12})
+
\widehat{D}^{0}_{\,i_0+2}(\Delta_1)\,D^{(i_0+1,k)}_{j_1}(\Delta_{12})
\right] L_4.
\end{equation}
Note that in the second term, the subtree $A$ rooted at $\tau_1$ is carried by the newborn daughter, accordingly $D^{(i_0+1,k)}_{i_0+2}(\Delta_1)$ becomes $\widehat{D}^{0}_{i_0+2}(\Delta_1)$ because that daughter has unknown degree. The continuing daughter instead leads to the single tip $j_1$, contributing $D^{(i_0+1,k)}_{j_1}(\Delta_{12})$.

Finally, the root edge from $\hat t_0$ to $\tau_2$ carries the root lineage from $i_0$ to $i_0+1$ over elapsed time
$\Delta_0$. Its contribution is
\begin{equation}
\label{eq:L6-final}
L_6
=
D^{(i_0,k)}_{\,i_0+1}(\Delta_0)\,L_5.
\end{equation}
Substituting \eqref{eq:L4-final} and \eqref{eq:L5-final} into
\eqref{eq:L6-final} yields the complete likelihood for the three-tip tree
\begin{align}
\label{eq:lik3-final}
L(\mathscr{T}\mid\theta,k)
=\;&
D^{(i_0,k)}_{\,i_0+1}(\Delta_0)\,
(k-i_0)\beta\,
(k-i_0-1)\beta
\notag\\
& \times
\left[
D^{(i_0+1,k)}_{\,i_0+2}(\Delta_1)\,\widehat{D}^{0}_{j_1}(\Delta_{12})
+
\widehat{D}^{0}_{\,i_0+2}(\Delta_1)\,D^{(i_0+1,k)}_{j_1}(\Delta_{12})
\right]
\notag\\
&\qquad \times
\left[
D^{(i_0+2,k)}_{j_2}(\Delta_{21})\,\widehat{D}^{0}_{j_3}(\Delta_{31})
+
D^{(i_0+2,k)}_{j_3}(\Delta_{31})\,\widehat{D}^{0}_{j_2}(\Delta_{21})
\right].
\end{align}
The structure of equation~\eqref{eq:lik3-final} makes the general pattern transparent. Each edge of the continuing lineage contributes a $D_*^{(i,k)}$ factor retaining the latent degree $k$. Each edge carrying a newborn lineage contributes a $\widehat{D}_*^{0}$ factor in which the degree has been integrated out according to $w_k$. Each internal node contributes an instantaneous rate $(k-i)\beta$, and both role assignments at each branching event are summed.

\subsection{Likelihood Conditioned on Non-Trivial Sampling}
\label{sec:theorem-likelihood}

The argument above extends to arbitrary $n$ and any rooted binary
time-stamped topology. In practice we know a priori that at least one individual was
sampled ($n \ge 1$) and we condition on this event $\{n > 0\}$.  By definition, $E_{(i_0,k)}(-\hat t_0)$ is the probability that a lineage of type $(i_0,k)$ at the root leaves no sampled descendant by the present, so $P(n > 0 \mid i_0,k) = 1-E_{(i_0,k)}(-\hat t_0)$.
The root degree $k$ is latent and $i_0$ is unknown; marginalising jointly over $(i_0,k)$ using the degree weights $w_k$ and the degree-conditional equilibrium weights $\pi_{i_0|k}$ gives the following result.

\begin{theorem}
\label{thm:likelihood}
Given a sampled tree $\mathscr{T}$ with $n \ge 1$ observations,
conditioned on $n > 0$ and on the root time $\hat{t}_0$, the
likelihood under random contact degree is
\begin{eqnarray}
    \label{eq:likelihood-conditioned}
    L\!\left(\mathscr{T}\mid\theta,\,\hat{t}_0,\,n{>}0\right)
    \\ \nonumber
    \;&=&\;
    \sum_{k \ge 0} w_k
    \left[\sum_{i_0=0}^{k}
          \frac{\pi_{i_0 \mid k}}{1 - E_{(i_0,k)}(-\hat{t}_0)}\right]
    \cdot
    D^{(i_0,k)}_{i_{0 + 1}}(\Delta_0)
    \cdot
    \prod_{\ell=1}^{n} D^{(i_\ell,k)}_{j_\ell}(\Delta_\ell) \nonumber
     \\
   && \qquad \cdot \prod_{m=1}^{n-1}
    (k-i_m)\beta
    \left[
        D^{(i_m+1,k)}_{\mathcal{T}_{A_m}}(\Delta^{+}_{m}) \, \widehat{D}^{0}_{\mathcal{T}_{B_m}}(\Delta^{-}_{m})
        +
        D^{(i_m+1,k)}_{\mathcal{T}_{B_m}}(\Delta^{-}_{m}) \, \widehat{D}^{0}_{\mathcal{T}_{A_m}}(\Delta^{+}_{m})
    \right],\nonumber
\end{eqnarray}
where $k$ is the root degree, integrated out with weights $w_k$; $i_0 \in \{0, \ldots, k\}$ is the state of the root lineage, marginalised with the degree-conditional equilibrium weights $\pi_{i_0 \mid k}$; $i_{0 + 1}$ is the state reached by the root lineage at the first branching node $\tau_1$; $\Delta_0 = \tau_1 - \hat{t}_0$ is the elapsed time from root to the first branching node; $i_\ell$ is the continuing lineage at the origin of edge $\ell$; $j_\ell \in \{0, \ldots, k\}$ is the observed type at tip $\ell$; $\Delta_\ell$ is the elapsed time along edge $\ell$; $i_m$ is the continuing lineage immediately before branching event $m$; $\mathcal{T}_{A_m}, \mathcal{T}_{B_m}$ are the types or subtree labels of the two daughters at node $\tau_m$; $\Delta^{+}_{m}$ and $\Delta^{-}_{m}$ are the elapsed times along the two outgoing daughter edges; $\widehat{D}^0_{*}$ are the degree-averaged newborn observation densities; $\pi_{i_0 \mid k}$ are the stationary weights conditional on degree $k$; and $1 - E_{(i_0,k)}(-\hat{t}_0)$ is the probability that a root lineage $(i_0,k)$ produces at least one sampled descendant.

\end{theorem}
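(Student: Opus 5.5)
The proof goes by structural induction on the sampled tree, using the pruning recursion already set up in Section~\ref{sec:construction}, followed by marginalisation and conditioning at the root. The first step is to fix the root state $(i_0,k)$ and show that the unconditional joint density of $\mathscr{T}$ (topology, branching times, tip times and tip types) equals the product of edge factors and branching-node factors in \eqref{eq:likelihood-conditioned}, without the prefactor. The induction is on the number $n$ of tips.

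For $n=1$ the claim is the trivial case (2). The density is $D^{(i_0,k)}_{j_1}(t_1-\hat t_0)$ by the definition of $D$ in Proposition~\ref{prop:random-D}, and $D$ already includes the requirement that no other observation occurs.

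For the inductive step, take the oldest branching event $\tau_1$, at which the root lineage is in some state $i$. Three ingredients are needed.
\begin{enumerate}
\item The integral representation in the proof of Proposition~\ref{prop:random-D} shows that the density of the root edge up to $\tau_1$, with no observation on it, is carried by the same ODE. The only change is the initial condition, which is now given by the subtree contributions rather than by $\sigma\chi_i(j)$. This is exactly the ``branching-ending edge'' of Section~\ref{sec:two-edge-types}.
\item The transmission at $\tau_1$ occurs at rate $(k-i)\beta$.
\item The strong Markov property, together with the independence of the newborn's degree $K'\sim w_k$, makes the two daughter clades conditionally independent given the state at $\tau_1$. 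The continuing daughter starts in $(i+1,k)$. The newborn starts in $(0,K')$, and averaging over $K'$ produces $\widehat D^0$ and $\widehat E_0$, as in Propositions~\ref{prop:E-random} and~\ref{prop:random-D}.
\end{enumerate}
The two daughter subtrees have fewer tips, so the induction hypothesis applies to each, read recursively as $D^{(i+1,k)}_{\mathcal T_X}$ and $\widehat D^0_{\mathcal T_X}$. The tree records only observations, so it does not say which daughter is the continuing one. The two role assignments are disjoint events, and adding them gives the bracket in \eqref{eq:branching-contribution}. Multiplying over all $n-1$ internal nodes and all edges gives the product formula; the worked example \eqref{eq:lik3-final} is the case $n=3$.

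The second step handles the root. By the law of total probability, I weight the fixed-root density by the prior on the root type, namely $w_k$ for the latent degree and $\pi_{i_0\mid k}$ for $i_0$. Proposition~\ref{prop:stationary-random} justifies the latter as the frequency of a lineage caught during exponential growth, and Proposition~\ref{prop:spectral-gap-random} ensures this does not depend on the initial condition. Conditioning on $\{n>0\}$ then means dividing by $P(n>0\mid i_0,k)=1-E_{(i_0,k)}(-\hat t_0)$, using the defining property of $E$ from Proposition~\ref{prop:E-random}. I will condition per root type, so the divisor sits inside the sum over $(i_0,k)$, as the statement writes it. This is a conditioning within each root type and then a mixture, which is the convention the theorem adopts; it differs from dividing the marginal by a marginal $P(n>0)$, and I will state that reading explicitly.

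The main obstacle is the bookkeeping in the inductive step, not any estimate. Three points need care:
\begin{enumerate}
\item The continuing lineage's state index $i_m$ must increase by exactly one at each branching along that lineage, with $(k-i_m)\beta=0$ enforced once $i_m=k$.
\item Degree averaging must happen once per newborn, at its birth. The whole newborn subtree is computed conditional on its own latent degree and only then averaged, so $\widehat D^0_{\mathcal T_X}=\sum_\ell w_\ell D^{(0,\ell)}_{\mathcal T_X}$ recursively, and not edge by edge.
\item Two densities must be reconciled: the edge-density ODE, whose $E\cdot D$ coupling terms integrate out unobserved side branches, and the explicit branching factors at sampled nodes. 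Otherwise a transmission would be counted both inside $D$ and as a node factor.
\end{enumerate}
I would verify the last point by writing the full tree density as the iterated integral in the proof of Proposition~\ref{prop:random-D}. There, the sampled branching times are not integrated but fixed, and they contribute the delta-type rate factor.
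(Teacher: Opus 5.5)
Your route is the paper's own. Its proof is a one-paragraph appeal to the construction of Section~\ref{sec:construction} with the two random-degree substitutions (outer $w_k$-sum, $\widehat D^0$ for newborns). Your root step matches it, including the caveat that the $1-E$ divisor is applied per root type rather than as a marginal $P(n>0)$.

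\textbf{The gap.} It is the sentence ``multiplying over all $n-1$ internal nodes and all edges gives the product formula''. The paper also only asserts this, and it does not follow from your inductive step.
\begin{itemize}
\item \emph{Edge plus node is a sum, not a product.} Your ingredient~1 is right: a branching-ending edge is propagated by the linear system of Proposition~\ref{prop:random-D}, with data supplied by the node. But those data form a vector indexed by the lineage's state $(j,\ell)$ just before the branch. So edge plus node contribute $\sum_{(j,\ell)}P^{(i,k)}_{(j,\ell)}(\Delta)\,(\ell-j)\beta\,[\cdots]_{(j,\ell)}$, where $P$ is the degree-resolved propagator (same ODE, initial condition $\chi$, no $\sigma$). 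That is not one edge factor times one node factor.
\item \emph{The end state is genuinely random.} The term $(k-i)\beta\,\widehat E_0\,D^{(i+1,k)}_j$ lets unobserved transmissions advance $i$ along the edge. The term $(k-i)\beta\,E_{(i+1,k)}\,\widehat D^0_j$ hands the sampled lineage to a newborn of fresh degree. So your rule~(1) (index up by exactly one per sampled branching, degree $k$ retained) fails in this model, and the rate at $\tau_m$ is not $(k-i_m)\beta$ with $k$ the root degree.
\item \emph{Wrong edge factor.} $D^{(i_0,k)}_{i_0+1}(\Delta_0)$ is the wrong object for a branching-ending edge. It carries a sampling factor $\sigma$ for an event that does not happen, and it integrates out the end degree that the branching rate needs.
\item \emph{Double counting.} Read recursively, as you read $D_{\mathcal{T}_X}$, each bracket already contains every edge and node below $\tau_m$. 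Multiplying the brackets over $m$ and also by $\prod_\ell D^{(i_\ell,k)}_{j_\ell}(\Delta_\ell)$ therefore counts tip edges and deeper nodes repeatedly. In \eqref{eq:lik3-final} tip edges occur only inside brackets.
\end{itemize}
Your obstacle~3 points at exactly this step but leaves it as a promissory note.

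\textbf{What your argument actually proves.} Your strong-Markov and independence argument establishes the nested pruning form. Let $\mathcal{D}^{(i,k)}_{\mathcal{T}}$, for all $(i,k)$ jointly, solve the ODE of Proposition~\ref{prop:random-D} up the root edge of $\mathcal{T}$, with $\widehat{\mathcal{D}}^0_{\mathcal{T}}=\sum_\ell w_\ell\,\mathcal{D}^{(0,\ell)}_{\mathcal{T}}$. Start it from $\sigma\chi_i(j)$ at a tip of type $j$. At a branching node with daughter subtrees $A,B$, start it from $(k-i)\beta\,[\mathcal{D}^{(i+1,k)}_{A}\widehat{\mathcal{D}}^0_{B}+\mathcal{D}^{(i+1,k)}_{B}\widehat{\mathcal{D}}^0_{A}]$, each daughter evaluated at its own root-edge length. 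Then
$L=\sum_k w_k\sum_{i_0}\pi_{i_0\mid k}\,\mathcal{D}^{(i_0,k)}_{\mathscr{T}}(\Delta_0)/(1-E_{(i_0,k)}(-\hat t_0))$,
which for $n=1$ is the root-marginalised case~(2).

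To collapse this to the displayed flat product you would need edges that carry no hidden transmissions. An example is a fully recorded transmission tree with explicit survival factors $e^{-(\gamma+(k-i)\beta)\delta}$ and $E$-factors for unobserved side clades, as in Section~\ref{subsec_mle_known_tree_top}. The theorem does not make that assumption.
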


\begin{proof}
The proof follows exactly the construction of Section~\ref{sec:construction}, with two modifications reflecting random degree. First, the root degree $k$ is latent and enters as an outer summation weighted by $w_k$; all quantities depending on the root lineage \big($D^{(i,k)}$, $E_{(i,k)}$, rates $(k-i)\beta$\big) are conditional on $k$. Second, at every internal branching node, the newborn daughter's degree is unknown at infection and integrated out, that is, $D^{(0,k)}_*$ is replaced by $\widehat{D}^0_*$. The continuing daughter retains degree $k$ throughout. The root conditioning, the product structure over edges and branching nodes, and the sum over the two daughter role assignments proceed accordingly.
\end{proof}

\begin{remark}
\label{rem:topology}
The likelihood depends on the sampled tree only through (i) the observed
tip times $\{t_\ell\}$ and types $\{j_\ell\}$; (ii) the internal branching
node times $\{\tau_m\}$ and types $\{i_m\}$; and (iii) the rooted binary
topology. Once these are fixed, the likelihood is obtained entirely from
elapsed edge lengths $\Delta_0, \Delta_\ell, \Delta^{\pm}_{m}$, infection
rates $(k-i_m)\beta$ and the sum over the two daughter role assignments
at each branching node. The only structural difference from the fixed degree
case is that newborn daughters use $\widehat{D}_*^0$ rather than $D_*^{(0,k)}$.
\end{remark}

\subsection{Epidemiological reparametrisation}
\label{sec:reparam}

Working directly with $(\beta,\mu,\sigma)$ is statistically inconvenient because
the likelihood surface contains a near flat ridge along the direction of
simultaneous scaling of all three rates. We replace $(\beta,\mu,\sigma)$ by three epidemiological
interpretable quantities

\begin{equation*}
  \label{eq:reparam_inv}
  \beta \;=\; \frac{R_0(k)\,\gamma}{k},
  \qquad
  \sigma \;=\; p_{\mathrm{obs}}\,\gamma,
  \qquad
  \mu \;=\; (1 - p_{\mathrm{obs}})\,\gamma.
\end{equation*}
Here $R_0(k)$ is the basic reproduction number of a newly infected individual with
degree $k$, $\gamma$ is the total removal
rate, and $p_{\mathrm{obs}}$ is the probability that removal is accompanied by sampling.
Under random degree, the unconditional basic reproduction number is the degree-weighted average
\begin{equation*}
  \label{eq:R0_random}
  R_0 \;=\; \mathbb{E}[K]\,\frac{\beta}{\gamma}
       \;:=\; \mu_K \,\frac{\beta}{\gamma},
\end{equation*}
where $\mu_K = \sum_k k\,w_k$ is the expected contact degree. The parameter vector becomes $(R_0, \mu_K, \gamma, p_{\mathrm{obs}}, w_k)$. Under this reparameterisation, the conditional unobserved and observed clade probability for a degree $k$ lineage satisfies
\begin{equation*}
  \label{eq:E_reparam}
  \frac{d}{dt}E_{(i,k)}
  \;=\;
  \gamma\!\left[
    (1-p_{\mathrm{obs}})
    - \!\left(1 + \frac{(k-i)R_0(k)}{k}\right)\!E_{(i,k)}
    + \frac{(k-i)R_0(k)}{k}\,\widehat{E}_0\,E_{(i+1,k)}
  \right],
\end{equation*}

and
\begin{equation*}
\frac{d}{dt}D^{(i,k)}_{j}
= -\,\gamma\!\left[\left(1 + \frac{(k-i)\,R_0(k)}{k}\right) D^{(i,k)}_{j}
+ \frac{(k-i)\,R_0(k)}{k}\,\widehat{E}_{0}\,D^{(i+1,k)}_{j} + \frac{(k-i)\,R_0(k)}{k}\,E_{(i+1,k)}\,\widehat{D}^{0}_{j}\right],
\label{eq:Dij_reparam}
\end{equation*}

respectively  where $\widehat{E}_0 = \sum_\ell w_\ell E_{0,\ell}$ and $\widehat{D}^{0}_{j}(t) = \sum_{\ell\ge 0} w_{\ell}\,D^{(0,\ell)}_{j}(t)$ are evaluated across the degree distribution with initial condition $D^{(i,k)}_{j}(\tau) = p_{\mathrm{obs}}\,\gamma\,\chi_{i}(j)$. The factor $\gamma$ appears multiplicatively throughout

\begin{equation}
  \label{eq:scaling}
  E_{(i,k)}(t;\,R_0(k),\gamma,p_{\mathrm{obs}})
  \;=\;
  \tilde{E}_{(i,k)}\!\left(\gamma t;\,R_0(k),p_{\mathrm{obs}}\right),
  \quad
  D^{(i,k)}_j(t)
  \;=\;
  \gamma\,\tilde{D}^{(i,k)}_j\!\left(\gamma t;\,R_0(k),p_{\mathrm{obs}}\right),
\end{equation}
where $\tilde{E}_{(i,k)}$ and $\tilde{D}^{(i,k)}_j$ are dimensionless functions of $\gamma t$ alone. The extra factor of $\gamma$ in $D_j^{(i, k)}$ arises because $D_j^{(i, k)}$ is a density with respect to time. The solution $D^{(i,k)}_{j}$, jointly with the $E_{(i,k)}$
system, shows that the likelihood depends on $\gamma$ only through the time rescaling
$t \mapsto \gamma t$, the per-degree reproduction number $R_{0}(k)$, and the global Jacobian factor $\gamma^{n}$ from the $n$ tip initial conditions $\sigma\,\chi_{i}(j) = p_{\mathrm{obs}}\,\gamma\,\chi_{i}(j)$.
The scaling relation~\eqref{eq:scaling} separates the roles of the three parameters. First, $R_0(k)$ and the degree distribution $w_k$ determine the shape of the type distribution. Second, $\gamma$ sets the absolute timescale, in the sense that when branch lengths are expressed in calendar time, $\gamma$ is identifiable through the time rescaling; without an external clock, $\gamma$ must be fixed externally. Third, $p_{\text {obs }}$ governs the mix between unobserved and unobserved recovery (sampling); in practice it is the least precisely estimated parameter and is often fixed from external surveillance data.


\section{Application to Estimation: Simulation Study}
\label{sec:application_est_sim_study}

We now validate the likelihood framework on simulated trees under two
scenarios that differ in how much of the transmission phylogeny is
observable.
In the first, the sampled tree is resolved, that is, every internal
branching time, lineage state and sampled tip are known.
In the second, only the sampled tips are known, while a fraction of all internal
structure is latent, reflecting the situation encountered in practice
when sequenced cases are available, but the full transmission chain
cannot be reconstructed. For our estimator analysis, outbreak trees are generated using the stochastic simulator described in Section~4 of the supplementary material for a fixed deterministic degree such that $w_k = \mathbf{1}[k = k_0]$. Trees are retained only if they produce at least one sampled tip,
yielding $N = 753$ valid trees.

\subsection{Full Tree: Maximum Likelihood Estimation}
\label{subsec_mle_known_tree_top}

\subsubsection{Estimation setup}
\label{subsubsec:estimation_setup}

For each tree the log-likelihood $\ell_r(\theta) =
\sum_r \ell_r(\theta)$ decomposes into five additive terms: the
observed tip density $\log D^i_{j_{\mathrm{obs}}}(\delta)$, unobserved
sub-clade extinction $\log E_i(\delta)$, inter-branch survival
$-(\gamma + (k-i)\beta)\delta$, branching rate $\log(k-i)\beta$, and
root conditioning $\log\pi_{i_0} - \log(1-E_{i_0}(\Delta))$.
The functions $E_{(i,k)}$ and $D^{(i,k)}_j$ solve the backward Kolmogorov system
of Section~\ref{sec:model}, evaluated by linear
interpolation on a pre-computed ODE grid.

In our dataset, $4,596$ of the $9,630$ total edge rows are observed tips, giving an empirical fraction of $4,596/9,630 = 0.4773$. This is consistent with the true simulation value of $p_{obs}=0.5$, and we accordingly fix $p_{\mathrm{obs}} \approx 0.5$ throughout. We treat $R_0$ as the sole continuous free parameter and select $k$
by the Akaike Information Criterion (AIC), setting $\gamma = 1.0$.
Fixing $\gamma$ is necessary because scaling all branch lengths and
$\gamma$ by the same constant leaves the likelihood unchanged, making
$\gamma$ non-identifiable without an external clock.
Fixing $p_{\mathrm{obs}}$ is similarly required, as shown by
\citet{stadler2009incomplete}, the infection rate, recovery rate and sampling probability are not jointly identifiable from tree shape alone, and the
profile likelihood in $p_{\mathrm{obs}}$ is strictly monotone, so this parameter must be supplied from external surveillance data.
For each $k \in \{1,\ldots,12\}$, we minimise $-\ell(\theta)$ over
$R_0 \in (0.5, 30)$ and recover $\hat\beta = \hat{R}_0\gamma/k$.
Model selection uses $\mathrm{AIC} = 2n_p - 2\ell$ with $n_p = 2$
free parameters; profile-likelihood 95\% confidence intervals are
obtained by inverting the likelihood ratio.

\subsubsection{MLE results}
\label{subsubsec:agg_mle}

Table~\ref{tab:mle_full} reports the aggregate results.
The AIC-selected model is $k = 4$ (AIC weight $1.00$), yielding
$\hat{R}_0 = 6.04$ (95\% CI: $[5.84,\,6.24]$) and $\hat\beta = 1.51$
(95\% CI: $[1.46,\,1.56]$).
Both true values $R_0 = 6.0$ and $\beta = 1.5$ lie inside their
respective intervals, confirming accurate recovery under correct
model specification.

One practical and important feature is how $\hat{R}_0$ changes only slowly from $6.04$ at $k = 4$ to $4.77$ at $k=12$ (left panel figure~\ref{fig:mle_result}), a shift of just over one
unit as $k$ changes by eight.
Over the same range, $\hat\beta$ varies more than three-fold, from $1.51$
to $0.40$.
The product $k\hat\beta$ slow drift reflects a genuine partial non-identifiability in the sense that the tree constrains the total branching rate $k\beta$ rather than $k$ and $\beta$
individually, so the estimate of $R_0 = k\beta/\gamma$ is far more stable
than the estimates of either factor alone. The survival (inter-branch interval) terms of the log-likelihood drive
this behaviour, because a tree with $n$ sampled tips contains of order
$n\log n$ inter-branch segments but only $n$ tip segments.
Those survival terms take the form $
  \sum_{n} \bigl[-(\gamma + (k-i)\beta)\,\delta\bigr]$
and the MLE maximises this sum by finding the $\beta$ that makes the
observed inter-branch intervals $\delta$ look most typical under the hazard
rate $\gamma + (k-i)\beta$.
This condition approximately pins $k\hat\beta$ to the tree's branching
tempo, largely independent of the assumed $k$.
Fixing $k$ and estimating $\beta$ is therefore mathematically equivalent
to fixing $\beta$ and estimating $k$, so in both cases, the free parameter
adjusts until $k\hat\beta$ matches the data.
The AIC-selected estimate at $k = 4$ can thus serve as a reliable reference
point, and provides a natural hedge against uncertainty in $k$.

Additionally, the right panel in figure~\ref{fig:mle_result} confirms that the estimated stationary
distribution $\pi_{i}$ agrees closely with the empirical state
frequencies across all simulated trees. Numerically,
\[
(\pi_0,\pi_1,\pi_2,\pi_3,\pi_4)
\approx (0.4286,\,0.2857,\,0.1714,\,0.0857,\,0.0286),
\quad r\approx 3.5,
\]
so, at equilibrium, about \(43\%\) of individuals are of type $0$,
\(29\%\) type 1, \(17\%\) type 2, \(9\%\) type 3, and \(3\%\) type
4. The growth rate \(r\approx 3.5\) implies that, once the chain is
in its exponential phase, the total number of active individuals
grows by a factor of \(e^{3.5}\) per time unit.

\begin{figure}[ht]
    \centering
    \includegraphics[width=0.33\textwidth]{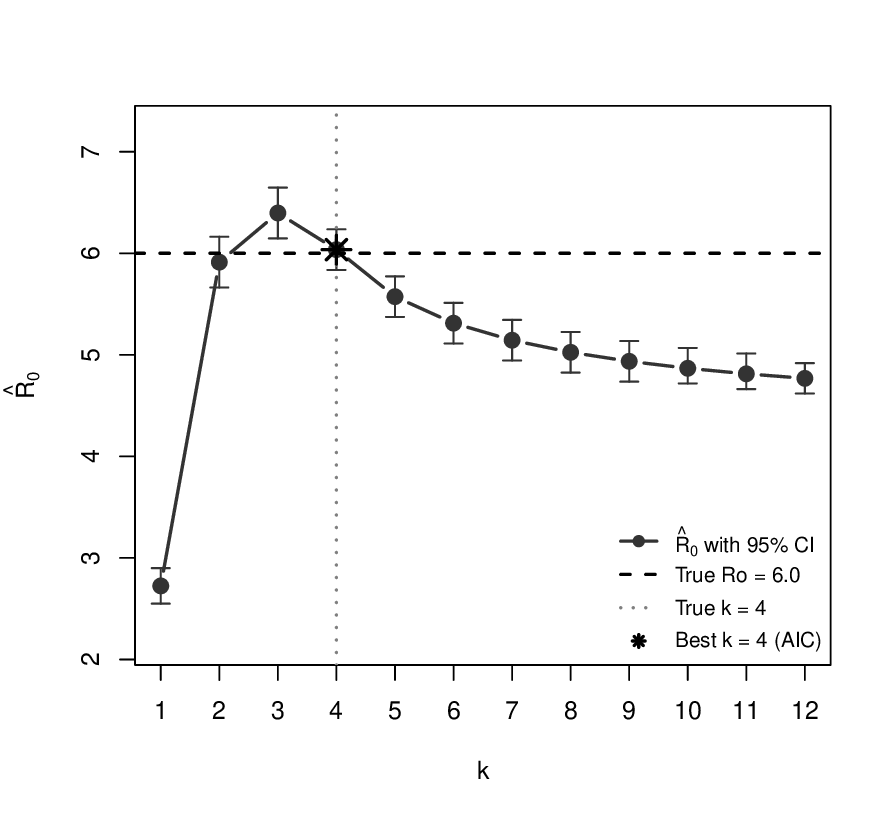}
    \includegraphics[width=0.33\textwidth]{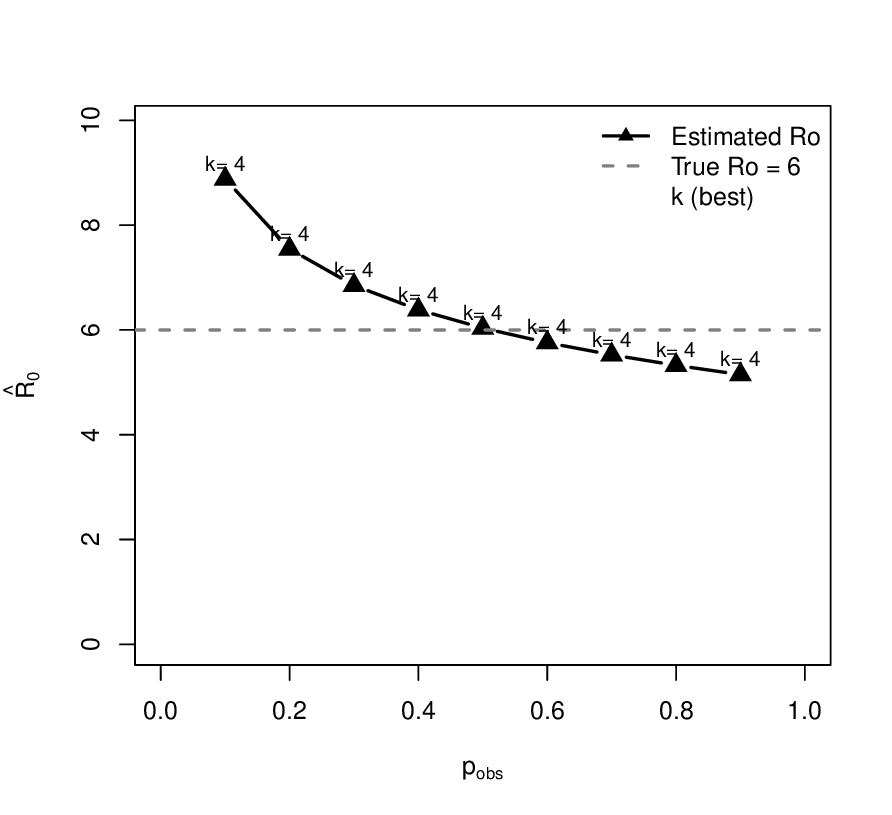}
    \includegraphics[width=0.33\textwidth]{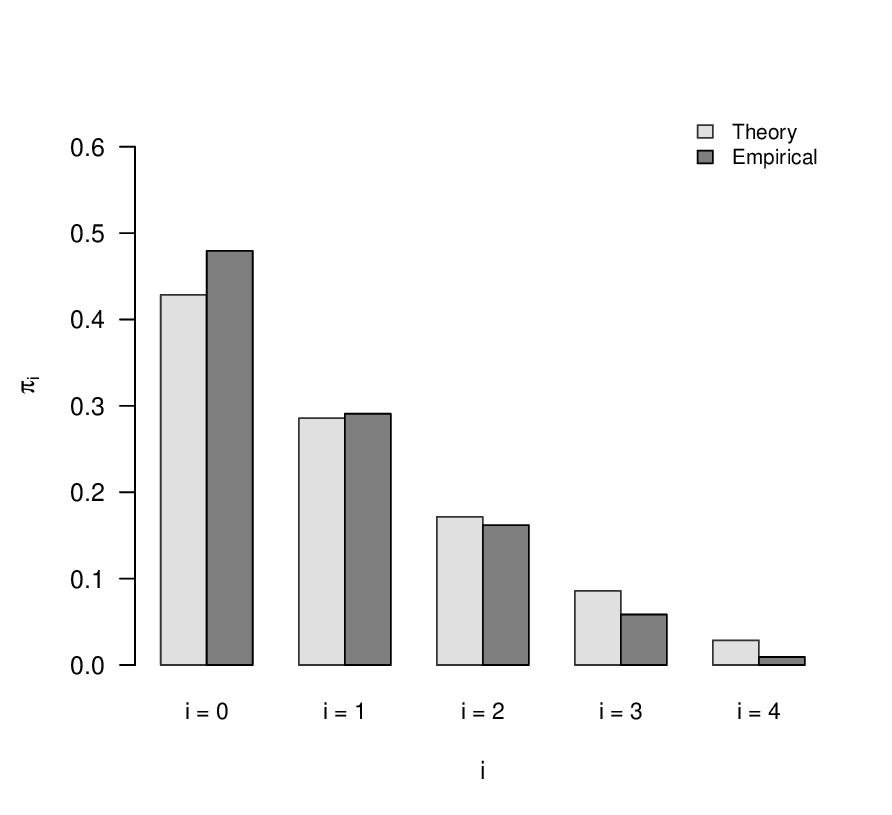}
    \caption{Left: $\hat{R}_0$ and $\hat\beta$ as functions of the
    assumed $k$ (true $k=4$ starred).
    Centre: $\hat{R}_0$ as a function of the assumed $p_{\mathrm{obs}}$
    (true value $0.5$ marked).
    Right: theoretical equilibrium frequencies $\pi_{i|k}$ versus
    empirical frequencies across 753 simulated trees. Choice of parameters:
    $\mu, \sigma = 0.5$, $\beta = 1.5$, $k = 4$ (fixed deterministic
    degree).}
    \label{fig:mle_result}
\end{figure}

\begin{table}[ht]
\small
\centering
\caption{Aggregate MLE results across $k = 1,\ldots,12$ from $N = 753$
simulated trees. Fixed: $\gamma = 1.0$, $p_{\mathrm{obs}} = 0.5$.
True values: $k = 4$ (starred), $R_0 = 6.0$, $\beta = 1.5$.
$\Delta\mathrm{AIC}$ is relative to the selected model.}
\label{tab:mle_full}
\begin{tabular}{crrrr}
\toprule
$k$ & $\hat{R}_0$ [95\% CI] & $\hat\beta$ [95\% CI] & $\Delta$AIC & AIC wt \\
\midrule
$4^{*}$ & 6.04 $[5.84,\;6.24]$ & 1.51 $[1.46,\;1.56]$ &     0.0 & 1.00 \\
5       & 5.57 $[5.37,\;5.77]$ & 1.11 $[1.08,\;1.16]$ &    60.1 & 0.00 \\
6       & 5.31 $[5.11,\;5.51]$ & 0.89 $[0.85,\;0.92]$ &   118.4 & 0.00 \\
7       & 5.14 $[4.94,\;5.34]$ & 0.73 $[0.71,\;0.76]$ &   164.1 & 0.00 \\
8       & 5.03 $[4.83,\;5.23]$ & 0.63 $[0.60,\;0.65]$ &   199.8 & 0.00 \\
9       & 4.94 $[4.74,\;5.14]$ & 0.55 $[0.53,\;0.57]$ &   228.2 & 0.00 \\
10      & 4.87 $[4.72,\;5.07]$ & 0.49 $[0.47,\;0.51]$ &   251.2 & 0.00 \\
11      & 4.81 $[4.66,\;5.01]$ & 0.44 $[0.42,\;0.46]$ &   270.1 & 0.00 \\
12      & 4.77 $[4.62,\;4.92]$ & 0.40 $[0.39,\;0.41]$ &   285.9 & 0.00 \\
\midrule
3 & 6.40 $[6.15,\;6.65]$ & 2.13 $[2.05,\;2.22]$ &     $3.16{\times}10^{5}$ & 0.00 \\
2 & 5.91 $[5.66,\;6.16]$ & 2.96 $[2.83,\;3.08]$ &     $1.20{\times}10^{6}$ & 0.00 \\
1 & 2.72 $[2.55,\;2.90]$ & 2.72 $[2.55,\;2.90]$ &     $3.16{\times}10^{6}$ & 0.00 \\
\bottomrule
\end{tabular}
\end{table}

\subsubsection{\texorpdfstring{Sensitivity to $p_{\mathrm{obs}}$}{Sensitivity to pobs}}
\label{sec:pobs_sensitivity}

The centre panel of Figure~\ref{fig:mle_result} shows $\hat{R}_0$ as
$p_{\mathrm{obs}}$ varies from $0.1$ to $0.9$ with all other settings
fixed.
AIC correctly selects $k = 4$ throughout the range, but $\hat{R}_0$
shifts from $\approx 8.88$ at $p_{\mathrm{obs}} = 0.1$ to
$\approx 5.15$ at $p_{\mathrm{obs}} = 0.9$, a total range of $3.73$
units around the true $R_0 = 6.0$.
The mechanism is intuitive in both directions. A high
$p_{\mathrm{obs}}$ means the estimator expects almost every removal to generate a sampled tip; each observed tip, therefore, contributes only modestly to the likelihood, and a slower transmission rate suffices to explain a modestly lower $R_0$ in the tree. Conversely, a low $p_{\mathrm{obs}}$ treats each sampled tip as a rare event, so the estimator must postulate very fast transmission
to account for the many observed lineages within the observation
window, consequently leading to a higher $R_0$ in the tree.
In either case, the NLL at the optimum is substantially larger than
under the correct $p_{\mathrm{obs}} = 0.5$, confirming that the
model fit is degraded rather than merely reparameterised.
These findings reinforce the non-identifiability result; errors of
even a moderate size in the assumed sampling probability can shift
$\hat{R}_0$ by more than a full unit, making reliable external
estimates of surveillance coverage an essential prerequisite for
valid inference.

\subsection{Partially Resolved Tree: Bayesian }
\label{subsec:bayesian_tips_only}

When the full phylogeny is incomplete or unresolved, only the sampled tips
$\mathcal{D} = \bigl\{x_i : i=1,\ldots,n\bigr\}$, with $x_i=(t_i,j_i)$, are mostly
observed, while some or all internal branching times and states are latent. We estimate $R_0$ and $k$ jointly via a Bayesian approach, and Metropolis-Hastings Monte Carlo Markov Chain (MCMC) methods, analytically marginalising over the unknown
internal states.

\subsubsection{Bayesian model and prior specification}
\label{subsubsec:bayes_model}

At each latent branching node with parent-edge duration $\delta$, the
fixed state survival and branching rate contributions are replaced by
the weighted sum
\begin{equation}
  \log\!\Bigl(
    \textstyle\sum_{s=0}^{k}
    \pi_s\,(k-i)\beta\,e^{-(\gamma+(k-i)\beta)\delta}
  \Bigr),
  \label{eq:state_marg}
\end{equation}
where $\pi_s$ is the equilibrium frequency of type $s$.
This marginalisation eliminates the latent states exactly while
preserving the likelihood structure. The joint posterior over $R_0$, $k$ and all latent branch time
vectors $\{\boldsymbol{\tau}^{(r)}\}$ is
\begin{equation}
  \pi\!\bigl(R_0, k, \{\boldsymbol{\tau}^{(r)}\} \mid \{\mathcal{D}^{(r)}\}\bigr)
  \;\propto\;
  \prod_{r=1}^{N} \tilde{L}\!\bigl(\mathcal{D}^{(r)} \mid R_0, k,
    \boldsymbol{\tau}^{(r)}\bigr)\cdot\pi(R_0)\cdot\pi(k)
  \cdot \prod_{m} \mathbf{1}\!\bigl[\tau_m^{(r)}\in[\ell_m^{(r)},u_m^{(r)}]\bigr],
  \label{eq:full_posterior}
\end{equation}
where $r$ indexes a valid sampled tree and each latent time $\tau_m$ is constrained to the feasibility interval $[\ell_m,u_m]$ defined by the surrounding observed events: $\ell_m$ is the start time of the parent lineage's current edge and $u_m$
is the time of the child lineage's next recorded event. Any $\tau_m$ outside $[\ell_m,u_m]$ produces a non-positive elapsed time on an incident edge and renders the likelihood undefined.
We place a $\mathrm{LogNormal}(\log 5,\,1)$ prior on $R_0$ (median 5,
95\% mass in $[0.74,\,33.8]$), a discrete uniform prior $k \sim \mathrm{Uniform}\{1,\ldots,K_{\max}\}$, and a uniform prior
on $\tau_m$ over $[\ell_m,u_m]$.

\begin{figure}[ht]
\centering
\includegraphics[width=0.35\linewidth]{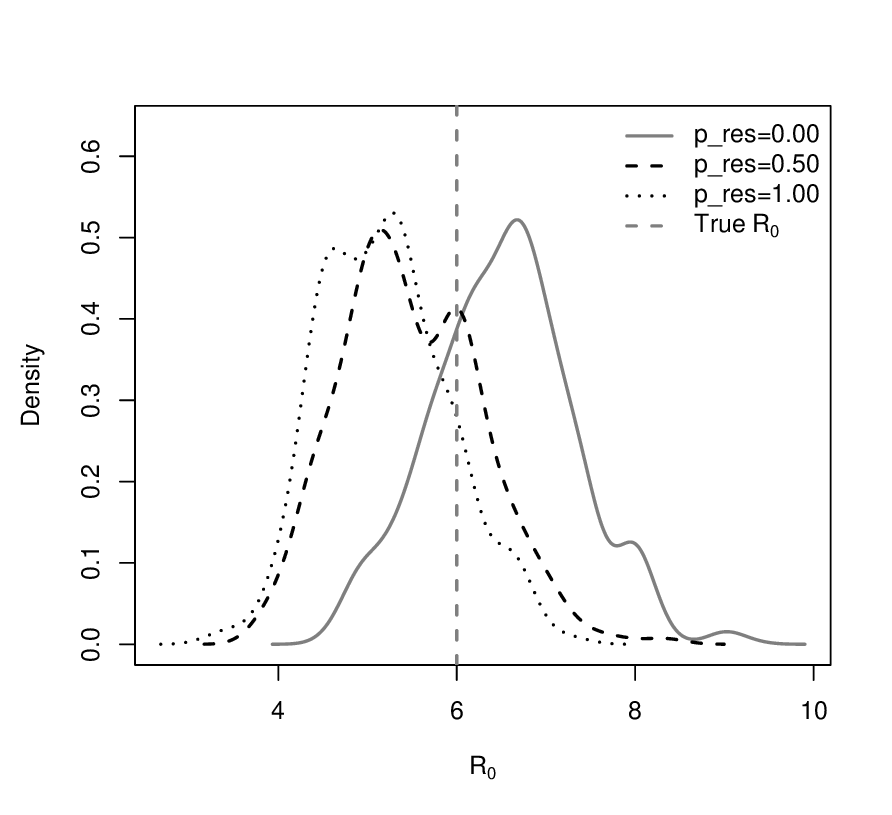}
\includegraphics[width=0.35\linewidth]{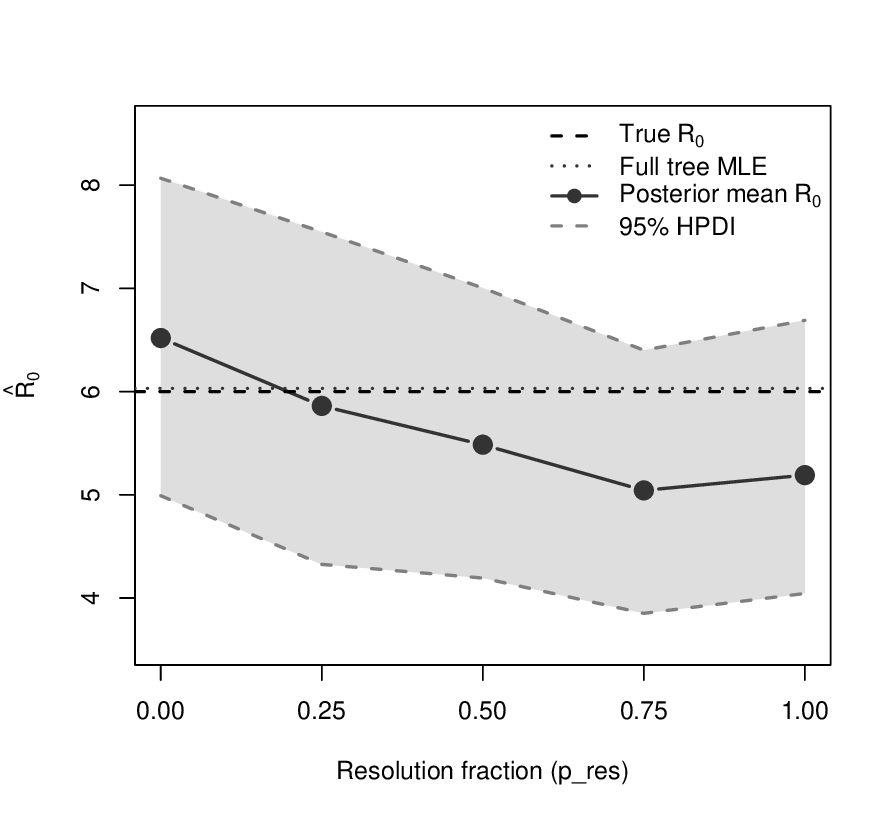}
\includegraphics[width=0.35\linewidth]{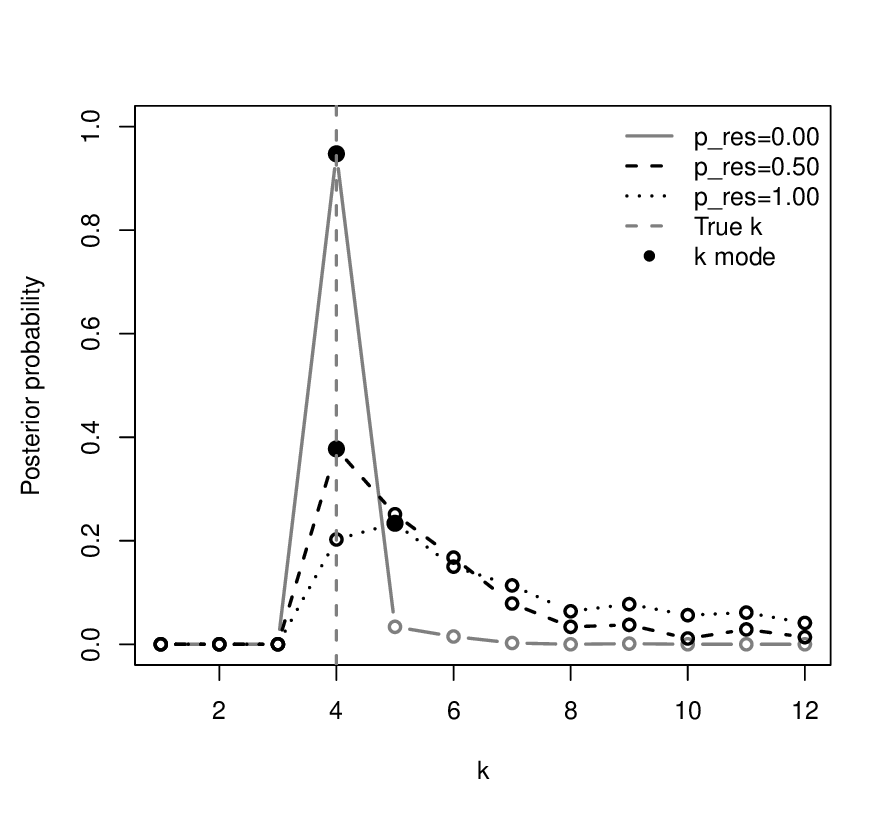}
\includegraphics[width=0.35\linewidth]{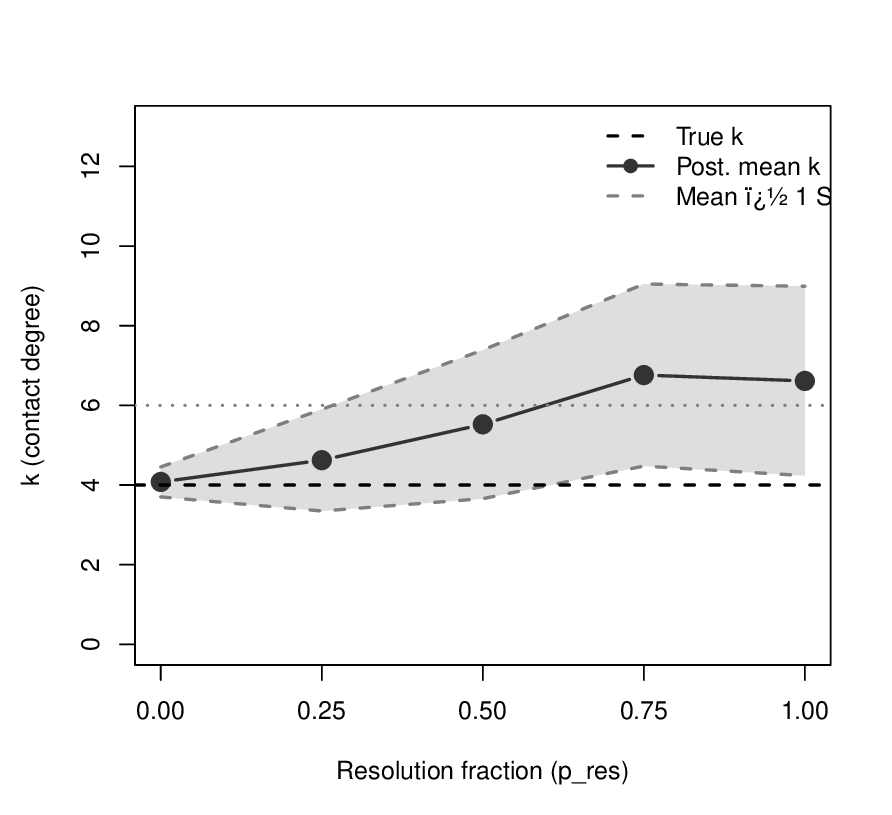}
\caption{Resolution sensitivity analysis with joint estimation of
$R_0$ and $k$.
Top left: posterior densities of $R_0$ at $p_{\mathrm{res}} \in
\{0.0,\,0.5,\,1.0\}$.
Top right: posterior mean $\hat{R}_0$ and 95\% HPDI versus
$p_{\mathrm{res}}$; the dashed horizontal line marks the true
$R_0 = 6.0$.
Bottom left: posterior PMF of $k$ at the same three levels.
Bottom right: posterior mean of $k$ and one-SD band versus
$p_{\mathrm{res}}$; the dashed horizontal line marks the true $k = 4$.
Fixed parameters: $\gamma = 1.0$, $p_{\mathrm{obs}} = 0.5$.}
\label{fig:partial_sweep}
\end{figure}

\begin{table}[ht]
\centering
\footnotesize
\caption{Posterior summaries of $R_0$ and $k$ across resolution
fractions $p_{\mathrm{res}}$, defined as the proportion of internal
branching times treated as unknown and latent in the MCMC.\@
At $p_{\mathrm{res}} = 0$ all branching times are known and fixed at their true
values; at
$p_{\mathrm{res}} = 1$ only tips are observed and all branching times are unknown.
Fixed: $\gamma = 1.0$, $p_{\mathrm{obs}} = 0.5$.
True $R_0 = 6.0$, true $k = 4$.
Full tree MLE ($k = 4$ AIC-selected, $N = 753$ trees) included as a reference.
}
\label{tab:partial_results}
\begin{tabular}{crrrrrrccc}
\toprule
$p_{\mathrm{res}}$ & \multicolumn{6}{c}{Posterior for $R_0$} &
  \multicolumn{3}{c}{Posterior for $k$} \\
\cmidrule(lr){2-7}\cmidrule(lr){8-10}
 & Mean & SD & \multicolumn{2}{c}{95\% HPDI} & Width & Bias
 & Mean & SD & Mode \\
\cmidrule(lr){4-5}
 & & & Lower & Upper & & & & & \\
\midrule
0.00 & 6.52 & 0.80 & 4.99 & 8.07 & 3.08 & $+0.52$ & 4.08 & 0.38 & 4 \\
0.25 & 5.86 & 0.79 & 4.33 & 7.55 & 3.22 & $-0.14$ & 4.62 & 1.28 & 4 \\
0.50 & 5.49 & 0.78 & 4.19 & 7.00 & 2.81 & $-0.52$ & 5.52 & 1.87 & 4 \\
0.75 & 5.04 & 0.69 & 3.85 & 6.40 & 2.55 & $-0.96$ & 6.76 & 2.29 & 5 \\
1.00 & 5.19 & 0.71 & 4.04 & 6.69 & 2.65 & $-0.81$ & 6.61 & 2.38 & 5 \\
\midrule
Full tree MLE & 6.03 & --- & 5.73 & 6.33 & 0.60 & $+0.03$ &
  \multicolumn{3}{c}{$k=4$ (AIC-selected)} \\
\bottomrule
\end{tabular}
\end{table}

\subsubsection{Results}
\label{subsubsec:resolution_results}

Table~\ref{tab:partial_results} and Figure~\ref{fig:partial_sweep}
summarise the posteriors across the five resolution levels. Two
patterns emerge: $R_0$ is estimated with bias below one unit at every
level, while $k$ is recovered cleanly only when at least half of the
internal branching times are observed. The true value $R_0 = 6.0$
lies inside the 95\% highest posterior density interval (HPDI) at
all five $p_{\mathrm{res}}$, with the HPDI tightening from
$[4.99,\,8.07]$ at $p_{\mathrm{res}}=0$ to $[4.04,\,6.69]$ at
$p_{\mathrm{res}}=1$; the posterior mean drifts downward from $6.52$ to $5.19$, but the truth remains well inside the interval
throughout. The picture for $k$ is different: at
$p_{\mathrm{res}}\in\{0,\,0.25,\,0.50\}$ the posterior concentrates
at the true value $k=4$, with standard deviation rising only modestly
from $0.38$ to $1.87$, whereas at $p_{\mathrm{res}}\in\{0.75,\,1\}$
the mode shifts to $k=5$ and the posterior spreads (SD~$\approx 2.3$).

These patterns share a common origin. The observed branching tempo
constrains the product $k\beta$, so $R_0=k\beta/\gamma$ remains
identified throughout; separating $k$ from $\beta$ requires, in
addition, the spacing between successive transmissions, which
reveals how $(k-i)\beta$ changes as $i$ grows. When most branching
times are latent the likelihood surface becomes flat along the
$k\beta$ ridge, the posterior for $k$ inherits the shape of the
prior, and the secondary drift in $\hat R_0$ reflects larger $k$
being accepted with proportionally smaller $\beta$. For applications where only $R_0$ is of interest, the framework delivers useful
inference from very partial trees; for applications that also target contact degrees,
at least a partial resolution of the internal branching structure is required.

\section{Application to Estimation: Karnataka COVID-19 First Wave}
\label{sec:empirical-karnataka}

Having validated our estimator on simulated data, where it recovered the key parameters of the phylogenetic model with well-calibrated uncertainty, we now apply it to empirical contact tracing data from the first wave of COVID-19 in Karnataka, India.

\subsection{Dataset}
\label{subsec:dataset}

The data come from the Karnataka Integrated Disease Surveillance
Programme (IDSP) linelist for cases reported between 9 March and
31 May 2020, described in detail by \citet{gupta2022contact}.
The linelist records, for each confirmed case, a unique identifier,
the date of laboratory confirmation, the number of primary downstream
contacts the individual infected, a cluster
identifier linking individuals within the same transmission chain,
and the detection route (whether the individual presented
symptomatically or was identified through contact tracing).
Infector-to-infectee links are recorded explicitly in a separate
contact file of 5{,}105 directed edges, symptom onset dates are
available for 261 individuals from an auxiliary timing file, and
serial-interval data (infector onset minus infectee onset) are
available for 54 infector-infectee pairs.

\subsection{Cohort definition and timing}
\label{subsec:data-prep}

We restrict to the early phase, 9 March to 31 May 2020, yielding 3{,}221 confirmed cases. Our likelihood models individuals who enter surveillance through independent symptomatic confirmation; cases classified as Local Traced (identified solely because they were named
as a contact of a known case) are excluded because their
ascertainment is conditional on knowing their infector, which would inflate the effective sampling probability.
After exclusion of 820 traced cases (25.5\%), the analysis cohort
contains $n = 2{,}401$ individuals
(386 Local Untraced, 1{,}897 Imported Domestic, 118 Imported
International), giving an empirical sampling probability of
$p_{\mathrm{obs}} = 2401 / 3221 \approx 0.75$.
For each individual $i$, the observed type at detection is
$j_{\mathrm{obs},i}$, mapping directly
to the state in the multi-type branching process; $j_{\mathrm{obs}}$
ranges from 0 to 30 across the cohort, with 92.6\% of individuals
at $j_{\mathrm{obs}} = 0$ and mean $0.2782$ (See Table~\ref{tab:covid_jobs}).

The branch length entering the likelihood for individual $i$ is
$\delta_i = t_{\mathrm{conf},i} - \tau_{\mathrm{born},i}$.
The confirmation time $t_{\mathrm{conf},i}$ is observed for all
individuals; the infection time $\tau_{\mathrm{born},i}$ is
latent for 91.3\% of the cohort (onset dates are recorded
for only 209 of 2{,}401 individuals).
Latent infection times are proposed under the feasibility constraint $\tau_{\mathrm{born},i} \in [\ell_i, u_i]$
with $u_i = t_{\mathrm{conf},i} - \delta_{\min}$ (infection precedes
own confirmation by at least $\delta_{\min}$) and
$\ell_i = t_{\mathrm{conf},\mathrm{par}(i)}$ when the infector is
recorded or $\ell_i = 0$ for index cases.
The timescale is fixed by the mean serial interval
$\bar{s} = 5.07$ days (sd 4.91, $n = 54$), so that $\gamma = 1$
model unit equals $\bar{s}$ days; the mean onset-to-confirmation
delay of 5.15 days (sd 3.65, $n = 261$) is consistent with this
timescale.

\begin{table}[h]
    \centering
    \footnotesize
    \caption{Total number and frequencies of observed states at observation.}
    \label{tab:covid_jobs}
    \begin{tabular}{r*{22}{r}}
        \toprule
        $j_{obs}$ & 0 & 1 & 2 & 3 & 4 & 5 & 6 & 7 & 8 & 9 & 10 & 12 & 14 & 15 & 16 & 17 & 19 & 22 & 28 & 29 & 30 \\
        \midrule
        Freq. & 2223 & 75 & 31 & 16 & 15 & 14 & 6 & 2 & 3 & 1 & 3 & 1 & 1 & 1 & 1 & 1 & 3 & 1 & 1 & 1 & 1 \\
        \bottomrule
    \end{tabular}
\end{table}

\subsection{Parameter assumptions and priors}
\label{subsec:assumptions}

The fixed parameters are $\gamma = 1.0$ and $p_{\mathrm{obs}} = 0.75$.
Fixing $\gamma$ anchors the timescale, which is otherwise
non-identifiable due to invariance under simultaneous rescaling of
branch lengths and removal rate; fixing $p_{\mathrm{obs}}$ at the
empirical fraction is required because the profile likelihood in
$p_{\mathrm{obs}}$ is strictly monotone, as established in
Section~\ref{subsec_mle_known_tree_top}.

We model the contact degree as an independent draw
$K \sim \mathrm{NegBin}(\mu=\mu_K,\phi=\phi_K)$, $K\in\{1,\ldots,K_{\max}\}$,
the standard distributional assumption for respiratory pathogens
\citep{lloyd2005superspreading,endo2020estimating,adam2020clustering,gupta2022contact,okolie2023parameter}.
The dispersion is fixed at $\phi_K=0.29$, taken from
\citet[Table~S2]{gupta2022contact} for symptomatic cases ($n=203$)
in the same Karnataka cohort (95\% CI $0.23$--$0.37$); note that
the symbol $k$ in their notation is our $\phi_K$, whereas
$\mu_K=\mathbb{E}[K]$ is the free parameter that our MCMC estimates.
At each iteration $\mu_K$ determines the mixture weights
$w_k=\mathbb{P}(K=k\mid\mu_K,\phi_K)$ entering the ODE system and
the likelihood marginalisation over the unknown focal degree. We place priors $\mu_K\sim\mathrm{LogNormal}(\log 11,\,0.5)$
(median $11$, 95\% mass on $[4.1,\,29.3]$; the median matches both the reported median of 11 high-risk contacts per index case in
\citet{gupta2022contact} and the mean transmission cluster size
of $11.8$ in our cohort) and $R_0\sim\mathrm{LogNormal}(\log 3,\,1)$,
broad enough to cover the early-wave literature.

\subsection{Results}
\label{sec:results-covid}

For the first wave of the COVID-19 outbreak in Karnataka, India, our estimate indicates a posterior mean of $R_0 = 2.60$ with a 95\% HPDI of
$[1.08, 5.25]$ and a median of 2.33; the posterior mean of
$\mu_K$ is 17.5 with a posterior mode at $\mu_K = 10$.
Convergence diagnostics indicate adequate mixing
(Geweke $z = -1.91$, effective sample size $91$ from $800$ posterior
draws after burn-in and thinning).
Figure~\ref{fig:covid-posterior} shows the posterior distribution
of $R_0$ (top left), the posterior mass function for $\mu_K$
(top right), the joint posterior over $(\mu_K, R_0)$ (bottom
left), and the autocorrelation of the $R_0$ chain (bottom right).
The joint posterior shows that the marginal for $R_0$
remains informative because the data constrain the product directly. The posterior mean of $R_0 = 2.60$ for symptomatic cases sits within the range reported by other early wave studies of SARS-CoV-2 in
India.
\begin{table}[ht]
\centering
\footnotesize
\caption{Posterior summaries from the Karnataka COVID-19 MCMC.\@
Fixed parameters: $\gamma = 1.0$ (1 model unit $= 5.07$ days),
$p_{\mathrm{obs}} = 0.75$, $\phi_K = 0.29$.
MCMC: $N_{\mathrm{iter}} = 2{,}000$, burn-in $= 400$, thin $= 2$.\@}
\label{tab:covid-results}
\begin{tabular}{lrrrrr}
\toprule
Parameter & Mean & SD & Mode & \multicolumn{2}{c}{95\% HPDI} \\
\cmidrule(lr){5-6}
& & & & Lower & Upper \\
\midrule
$R_0$        & 2.60  & 1.26  & ---   & 1.08 & 5.25 \\
$\mu_K$      & 17.5  & 10.7  & 10.0   & ---  & ---  \\
\bottomrule
\end{tabular}
\end{table}

\begin{figure}[ht]
\centering
\includegraphics[width=0.34\linewidth]{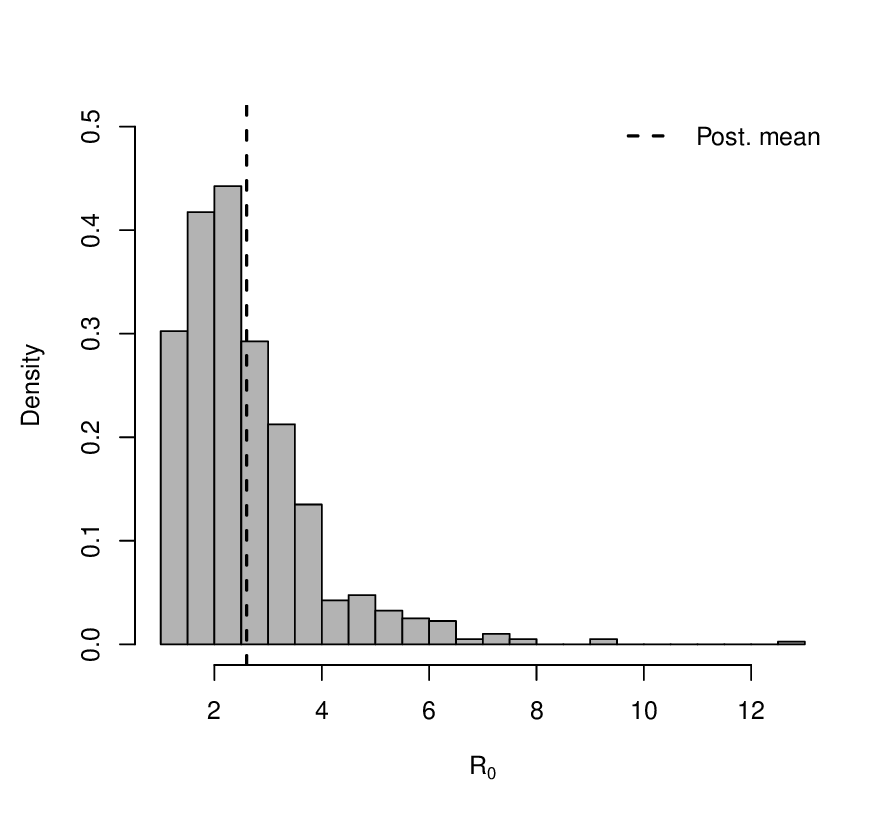}
\includegraphics[width=0.34\linewidth]{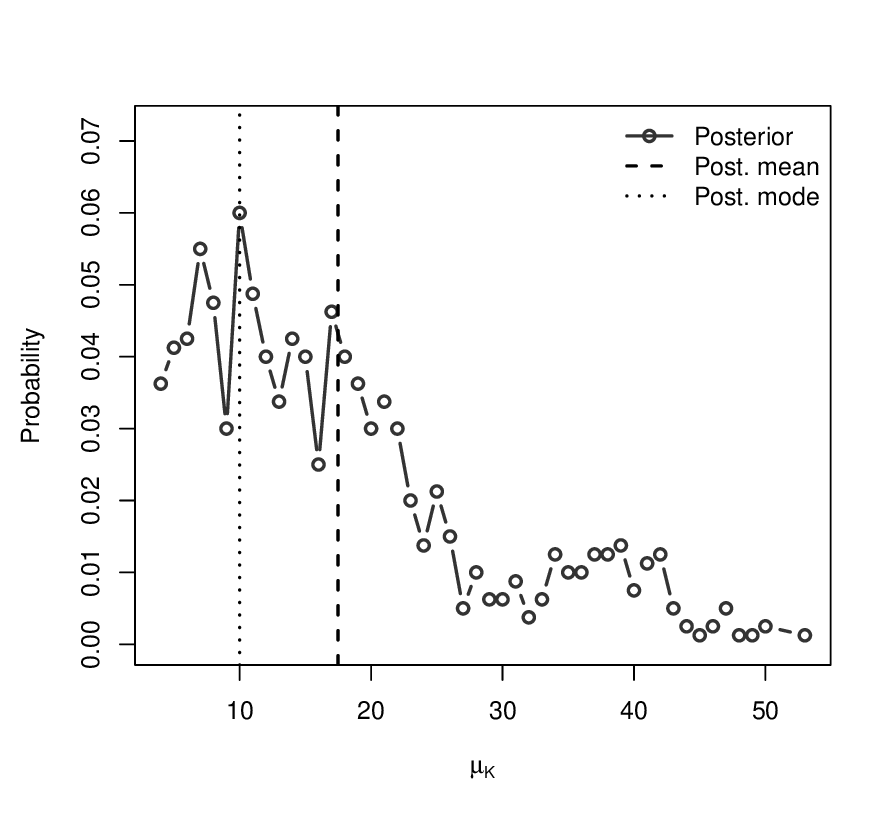}
\includegraphics[width=0.34\linewidth]{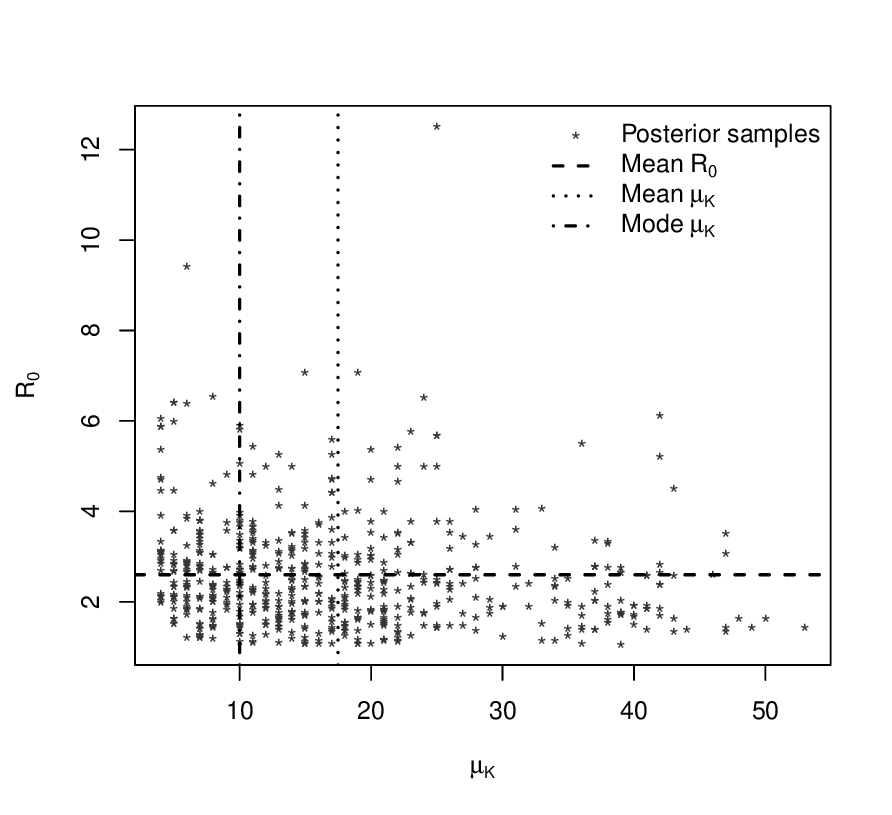}
\includegraphics[width=0.34\linewidth]{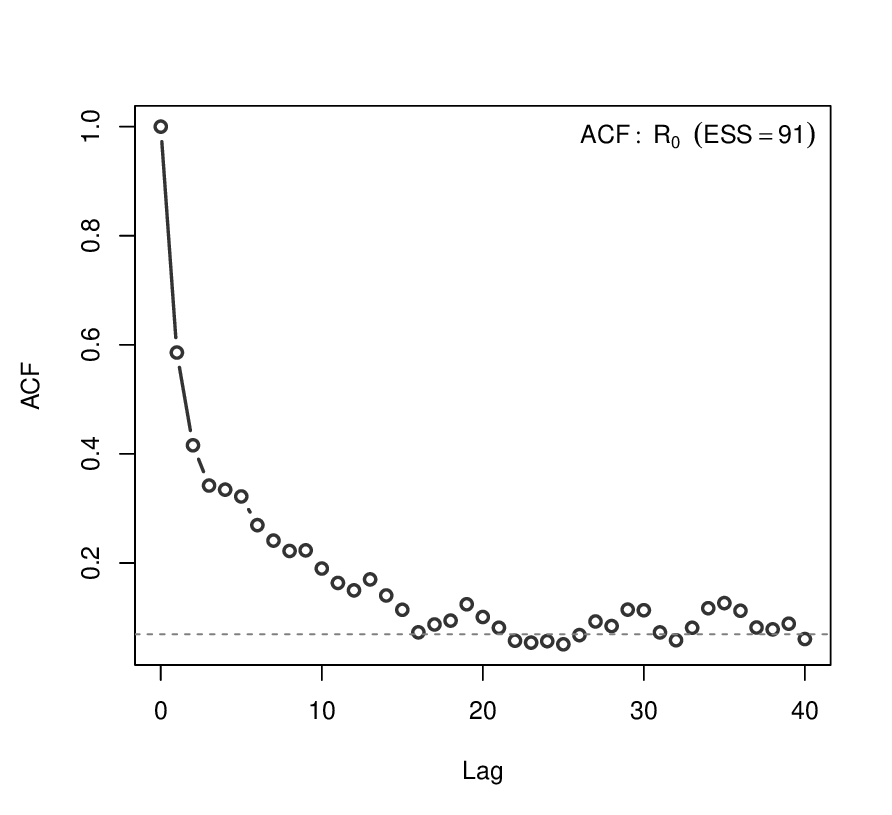}
\caption{Posterior inference for the Karnataka COVID-19 first wave.
Top left: marginal posterior density of $R_0$ (mean 2.60).
Top right: marginal posterior probability mass function of the
mean contact degree $\mu_K$ (mode at 10, mean 17.5).
Bottom left: joint posterior $(\mu_K, R_0)$.
Bottom right: autocorrelation of the $R_0$ chain, with effective
sample size $91$.
Fixed: $\gamma = 1.0$, $p_{\mathrm{obs}} = 0.75$, $\phi_K = 0.29$.}
\label{fig:covid-posterior}
\end{figure}

\citet{gupta2022contact} report $R = 2.04$ (95\% CI 1.56--2.67) for
symptomatic cases in the same Karnataka cohort, estimated from a
direct fit to the offspring distribution; our estimate is consistent with theirs while incorporating epidemiological information from the
contact graph and latent infection times. At national level, \citet{marimuthu2021covid} estimate
$R_0 \approx 1.83$ for early India, while exponential growth and
SEIR-based analyses of the pre-lockdown phase report values in the
range 2.0--2.6 \citep{ranjan2020covid, sardar2020assessment}.
Our estimate is consistent with this range and with the heterogeneous
contact structure characteristic of SARS-CoV-2 transmission, in which a small fraction of individuals contribute disproportionately to
onward spread.

\medskip
These point estimates fall into a three-level ordering of reproduction numbers implied by the model. The intrinsic basic reproduction number $R_0=2.60$ counts the secondary cases a freshly infected individual of typical degree would generate while all of its downstream contacts are still susceptible. The equilibrium-weighted residual reproduction number $\bar{R}_0$, defined in equation~\eqref{eq:Rbar0-random}, discounts the downstream contacts that are, on average, already infected once the type distribution has equilibrated. In the growing phase, approximately one downstream neighbour per lineage is already infected (Corollary~\ref{cor:Spi-random}), so $\bar{R}_0=R_0\,(1-1/\mu_K)\approx 2.45$ at the posterior mean contact degree $\mu_K\approx 17.5$. The empirical effective reproduction number of \citet{gupta2022contact}, $R=2.04$, measures the offspring actually realised in the field. Because the posterior for $\mu_K$ is multimodal and its mean is correspondingly unstable, we also evaluate $\bar{R}_0$ at the posterior mode $\mu_K=10$, obtaining $\bar{R}_0\approx 2.34$. The modal and mean contact degrees thus give very similar residual reproduction numbers, $2.34$ and $2.45$ respectively, and both fall between the empirical $R=2.04$ and the intrinsic $R_0=2.60$. The gap $R_0-\bar{R}_0$ is precisely the within-tree contact-depletion correction that our augmented $(i,k)$ state space is built to capture. In the growing phase an average lineage has one already-infected downstream neighbour (Corollary~\ref{cor:Spi-random}), so the gap reduces to $\beta/\gamma=R_0/\mu_K$ in the sense that it is governed by the mean contact degree alone. It is therefore small here, between roughly $0.15$ and $0.26$ across the mean and modal degree, simply because $\mu_K$ is large, so that losing a single susceptible contact barely lowers transmission potential. Furthermore, the offset of the empirical $R=2.04$ slightly below both $R_0$ and $\bar{R}_0$ should be read with caution, since $R_0$ itself is only weakly resolved (95\% HPDI $1.08$--$5.25$); to the extent that it is real, it reflects mechanisms outside the model's growing-phase assumptions, including non-pharmaceutical interventions, case isolation, and contact-tracing-driven quarantine in force during the first wave, together with our restriction to the traced symptomatic cohort.
We emphasise that these results are preliminary in the sense that the
analysis cohort is restricted to individuals with documented
contact tracing information, so the inference reflects the
underlying epidemiological dynamics in the traced sub-population rather than the
state-wide epidemic.

\section{Discussion}
\label{sec:discussion}

This work develops a likelihood framework that incorporates network heterogeneity into infectious disease inference
by treating contact degree as a structural feature that shapes both
transmission hazards and observation probabilities. Building on
multi-type branching process theory and applied directly to transmission trees, the framework links observed branching patterns to both epidemiological rates and network structure through a closed-form likelihood. It contributes to the broader effort of integrating network-based epidemic modelling with
likelihood-based tree inference, and complements rather than
reproduces the sequence-based phylodynamic literature in which
multi-type branching processes have so far been most prominent.
The framework uses an augmented state space
in which individuals are characterised by both their total contact
capacity and the number of contacts already infected. This allows the model to capture a fundamental epidemiological reality such that as an individual infects more of their contacts, fewer susceptible
pathways remain available, and transmission intensity naturally
declines. Rather than treating contact degree as a static parameter,
we show how it directly shapes the likelihood of observed branching
patterns through both survival probabilities and tip densities
derived from closed-form differential equations. We derive exact formulas for fixed deterministic contact degrees and then generalise to random degrees. Finally, we validate the estimators on simulated transmission trees, and demonstrate its practical utility through an application to first wave COVID-19 contact tracing data from Karnataka, India.

\medskip
Our findings reveal a clear
relationship between tree resolution and parameter identifiability
in the sense that the basic reproduction number remains strongly
estimable at all resolution levels. This is because the product of
degree and transmission rate is fundamentally pinned by the
observed branching tempo regardless of how uncertainty is
distributed between these two components. The posterior for $R_0$
concentrates around its true value and maintains well calibrated
credible intervals whether internal branching times are fully
observed, partially latent or entirely unobserved. Even in the
most extreme scenario, where we have information about the tips
only, the posterior exhibits negligible bias and contains the truth
comfortably within its uncertainty bounds. These results follow
directly from the likelihood structure, in which inter-branch
survival and infection rates dominate the information content.
Secondary terms provide additional information sufficient to
constrain $R_0$, even when contact degree itself remains diffuse. From a
practitioner's perspective, reliable inference of epidemic intensity
is achievable even when network structure cannot be fully resolved.

\medskip
A central assumption underlying our likelihood is that the contact
structure is a static tree in which every observed transmission
can be uniquely assigned to an ancestor in the reconstructed
transmission tree, and that an individual's contact degree is fixed
throughout their infectious period. This assumption keeps the
likelihood closed-form but can be violated in real epidemics in
two related ways. First, outside infections occur when a lineage
is infected not by its apparent ancestor in the recorded tree but by
another source elsewhere in the network; simulation
studies~\citep{okolie2023parameter, metzig2019phylogenies} show
that such events inflate the apparent number of secondary cases
at internal nodes and can bias estimates of transmission rates
and degree variance upward. Second, dynamic contact patterns such as edge creation and deletion during the infectious period mean that the effective degree of a focal individual changes over
time, which is particularly relevant for infections with long
infectious periods or in endemic settings where social contacts
evolve on a timescale comparable to disease
spread~\citep{okolie2020exact}.

Both violations point to the same underlying generalisation in the sense that the
contact graph is more naturally modelled as a network in which reticulate patterns such as multiple infection
pathways and time-varying connectivity can be accommodated.
Evidence from related
work~\citep{okolie2023parameter} indicates that our estimator
remains stable under moderate deviations from the strict tree
assumption, but its accuracy diminishes when such events become
frequent. Extending the framework to a dynamic contact graph formulation, either through explicit correction factors for outside
infections or by treating the degree as a stochastic process
evolving alongside the epidemic, is an important direction for
future work.

\medskip
This work opens a pathway for
richer inference in disease outbreaks where detailed transmission
information is available independently of, or alongside, molecular
sequencing. The Karnataka application
illustrates this concretely. Applied to confirmed symptomatic cases
from the first wave (March--May 2020) under a Negative Binomial
contact degree distribution with dispersion fixed from the
empirical estimate of \citet{gupta2022contact}, the framework
produces a reproduction number consistent with independent early-wave estimates
for the same outbreak~\citep{gupta2022contact} and for India more
broadly~\citep{marimuthu2021covid, ranjan2020covid}. Rather than
treating network structure and tree topology as separate
sources of evidence, the framework integrates them into a unified
likelihood. This is particularly valuable for pathogens like
SARS-CoV-2 in outbreak settings where aggressive contact tracing
generates detailed transmission information that is often available
before, or independently of, sequence-based reconstruction.
The applicability extends naturally to scenarios with
partially resolved trees, incomplete contact tracing data, and
heterogeneous observation effort. Furthermore, the assumption of
constant removal rate and exponential growth could be relaxed in future extensions,
time-varying transmission rates could be incorporated to account for interventions,
and untraced contacts or missing edges could be brought explicitly into the likelihood.

\medskip
All in all, this study demonstrates that integrating network
structure into likelihood-based tree inference yields richer, more
interpretable estimates of epidemic parameters than homogeneous
mixing assumptions allow. More fundamentally, it reveals a clear
relationship between tree resolution and network
identifiability, showing that observed internal structure enhances
our ability to estimate contact patterns. As contact
tracing data continue to grow in richness and scale, integrating
network structure with likelihood-based inference on transmission trees will become
increasingly central to outbreak investigation and response.

\par\bigskip
{\bf Data Availability Statement}
\par\medskip
The empirical COVID-19 contact-tracing data from Karnataka analysed in this study were derived from the publicly available Karnataka Integrated Disease Surveillance Program (IDSP) linelist and are available via the GitHub repository \url{https://github.com/CovidToday/covid19-karnataka}, as described by Gupta et al. (2022). The simulated transmission trees and all underlying computational code, including the implementation of the ordinary differential equation solver and the Markov Chain Monte Carlo (MCMC) estimation framework, are openly available in the GitHub repository at \url{https://github.com/Austine316/Multi-type-branching-inference-on-contact-trees}.

\par\bigskip
{\bf Funding}
\par\medskip
This research was supported by a grant from the German Academic Exchange Service (DAAD) (Augustine Okolie) and by the Deutsche Forschungsgemeinschaft (DFG) through the TUM International Graduate School of Science and Engineering (IGSSE), GSC 81, within the project GENOMIE\_QADOP (Johannes M\"uller). Eno Akarawak and Isaac Ajiboye acknowledge support from the DAAD SDG Partnership Project, the Competent Network Mathematical Epidemiology for sub-Saharan Africa (CoNeMesSA).

\par\bigskip
{\bf Acknowledgements}
\par\medskip
The author thanks Simran Talreja
for her contributions to this project during her Master’s thesis at the Technical University of Munich.

\par\bigskip
{\bf Declaration of Competing Interest}
\par\medskip
{None}

\newpage
\appendix

\section*{Supplementary Information}

Detailed proofs of the equilibrium frequency distributions, convergence rates, expected downstream infection counts, and simulation strategy are provided in the accompanying Supplementary Information (SI) file.

\bibliography{references}

\bibliographystyle{unsrtnat}

\end{document}